\theoremstyle{definition}
\newtheorem{examp}{Example}
\newtheorem{lem}{Lemma}
\newtheorem{theo}{Theorem} 
\newtheorem{coro}{Corollary}
\newtheorem{prop}{Proposition}
\newcommand\meetirrdiff[2]{\mathfrak{m}(#1,#2)}
\newcommand\downvertices[1]{Anc(#1)}
\newcommand\downn[2]{e_{#1,#2}}
\newcommand\downnn[2]{{#1}_{#2}}
\newcommand\down[1]{e_{#1}}
\newcommand\successor[1]{{#1}^{+}}
\newcommand\shotvector[2]{sh_{#1}(#2)}
\newcommand\set[1]{\{ #1 \}}
\newcommand\oneshotvector[1]{sh_{#1}}
\newcommand\labelmap[1]{\mathfrak{m}(#1)}
\begin{document}
\title{Lattices generated by Chip Firing Game models: criteria and recognition algorithm\thanks{This paper was partially sponsored by the Vietnamese National Foundation for Science and Technology Development (NAFOSTED)}} 
%
%
%
%

\author{Trung Van Pham and Thi Ha Duong Phan}
\maketitle
\begin{abstract}
It is well-known that the class of lattices generated by Chip Firing games (CFGs) is strictly included in the class of upper locally distributive lattices (ULD). However a necessary and sufficient criterion for this class is still an open question. In this paper we settle this problem by giving such a criterion.  This criterion provides a polynomial-time algorithm for constructing a CFG which generates a given lattice if such a CFG exists. Going further we solve the same problem on two other classes of lattices which are generated by CFGs on the classes of undirected graphs and directed acyclic graphs.\\
\hspace{6.ex}\textit{\textbf{Keywords.}} Abelian Sandpile model, Chip Firing Game, discrete dynamic model, lattice, Sandpile model, ULD lattice, linear programming
\end{abstract}
\section{Introduction}
The Chip Firing Game (CFG) is a discrete dynamical model which was first defined by A. Bj\"orner, L. Lov\'asz and W. Shor while studying the `balancing game' \cite{BL92,BLS91,BTW87,S86}. The model has various applications in many fields of science such as physics \cite{DRSV95,BTW87}, computer science \cite{BL92,BLS91,GMP98}, social science \cite{B97,B99} and mathematics \cite{B99,M97,M01}.

The model is a game which consists of a directed multi-graph $G$ (also called \emph{support graph}), the set of \emph{configurations} on $G$ and an \emph{evolution rule} on this set of configurations. Here, a configuration $c$ on $G$ is a map from the set $V(G)$ of vertices of $G$ to non-negative integers. For each vertex $v$ the integer $c(v)$ is regarded as the number of chips stored in $v$. In a configuration $c$, vertex $v$ is \emph{firable} if $v$ has at least one outgoing edge and $c(v)$ is at least the out-degree of $v$. The \emph{evolution rule} is defined as follows. When $v$ is firable in $c$, $c$ can be transformed into another configuration $c'$ by moving one chip stored in $v$ along each outgoing edge of $v$. 
\begin{SCfigure} 
  \centering
  \includegraphics[bb=9 13 466 316,width=2.67in,height=1.77in,keepaspectratio]{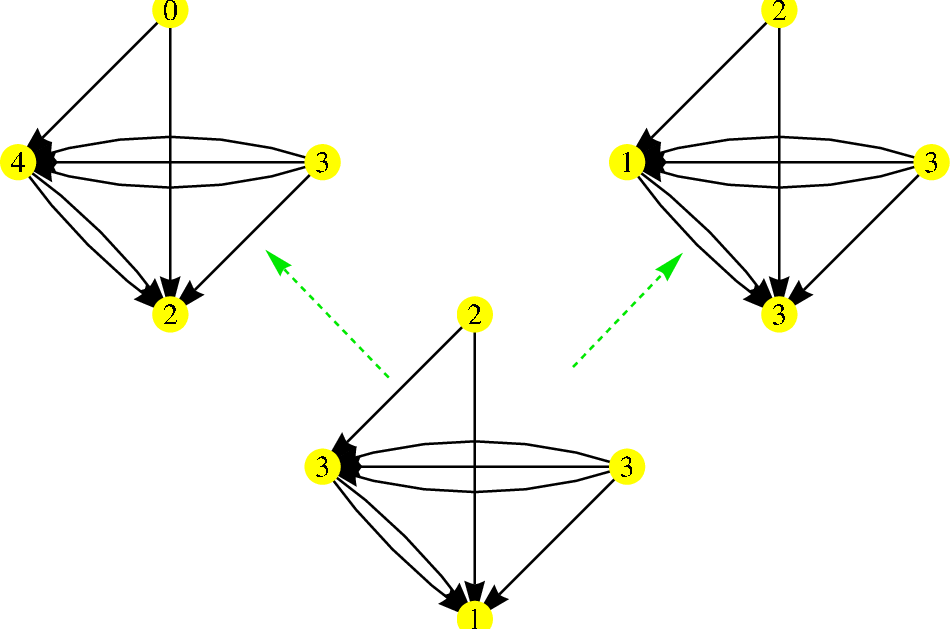}
  \caption{The number at each vertex indicates the number of chips stored there. The configuration at the bottom of the figure can be transformed into two new configurations since it has two firable vertices}
  \label{fig:image34}
\end{SCfigure}
We call this process \emph{firing $v$}, and write $c\overset{v}{\to}c'$. An \emph{execution} is a sequence of firing and is often written in the form $c_1\overset{v_1}{\to}c_2\overset{v_2}{\to}c_3\cdots\to c_{k-1}\overset{v_{k-1}}{\to}c_k$. The set of configurations which can be obtained from $c$ by a sequence of firing is called \emph{configuration space}, and denoted by ${\rm CFG(G,c)}$. 

A CFG begins with an initial configuration $c_0$. It can be played forever or reaches a unique fixed point where no firing is possible \cite{Eri93}. When the game reaches the unique fixed point, ${\rm CFG(G,c_0)}$ is an \emph{upper locally distributive lattice} with the order defined by setting $c_1\leq c_2$ if $c_1$ can be transformed into $c_2$ by a (possibly empty) sequence of firing  \cite{LP01}. A CFG is \emph{simple} if each vertex is fired at most once during any of its executions. Two CFGs are \emph{equivalent} if their generated lattices are isomorphic. Let ${\rm L(CFG)}$ denote the class of lattices generated by CFGs. A well-known result is that $D\subsetneq {\rm L(CFG)\subsetneq ULD}$  \cite{MVP01}, where $D$ and $ULD$ denote the classes of distributive lattices and upper locally distributive lattices, respectively. Despite of the results on inclusion, one knows little about the structure of ${\rm L(CFG)}$, even an algorithm for determining whether a given ULD lattice is in ${\rm L(CFG)}$ is unknown so far.

The Chip Firing Game has some important restrictions. An important restriction is the Abelian Sandpile model (ASM), a restriction of CFGs on undirected graphs  \cite{Mag03,BTW87,BLS91}\footnote{The term ``Abelian Sandpile model'' has been introduced in Dhar's earlier work in which the author focused on studying the algebraic property of the recurrent configurations of the model \cite{Dha90}. The definition for the Abelian Sandpile model we present here follows from the work of Magnien in which the author used the term again with a different definition \cite{Mag03}.}. This model has been extensively studied in recent years. In \cite{Mag03} the author studied the class of lattices generated by ASMs, denoted by ${\rm L(ASM)}$, and showed that this class of lattices is strictly included in ${\rm L(CFG)}$ and strictly includes the class of distributive lattices. As ${\rm L(CFG)}$, the structure of ${\rm L(ASM)}$ is little known. An algorithm for determining whether a given ULD lattice is in ${\rm L(ASM)}$ is still open.

The goal of our study is to find conditions that completely characterize those classes of lattices. One of the most important discoveries in our study is pointing out a strong connection  between the objects which does not seem to be closely related. These objects are meet-irreducibles,  simple CFGs, firing vertices of a CFG, and systems of linear inequalities. In particular, we establish a one-to-one correspondence between the firing vertices of a simple CFG and the meet-irreducibles of the lattice generated by this CFG. Using this correspondence we achieve a necessary and sufficient condition for ${\rm L(CFG)}$. By generalizing this correspondence to CFGs that are not necessarily simple, we also obtain a necessary and sufficient condition for ${\rm L(ASM)}$. Both conditions provide polynomial-time algorithms that address the above computational problems. As an application of these conditions, we present in this paper a lattice in ${\rm L(CFG)\backslash L(ASM)}$ that is smaller than the one shown in \cite{Mag03}.

In \cite{Mag03}, to prove $\rm{D \subsetneq L(ASM)}$ the author studied simple CFGs on directed acyclic graphs (DAGs) and showed that such a CFG is equivalent to a CFG on an undirected graph. It is natural to study CFGs on DAGs which are not necessarily simple. Again our method is applicable to this model and we show that any CFG on a DAG is equivalent to a simple CFG on a DAG. As a corollary, the class of lattices generated by CFGs on DAGs is strictly included in ${\rm L(ASM)}$. We also give a necessary and sufficient condition for the class of lattices generated by this model.

Section \ref{second section} gives some preliminary definitions, notations and results on lattice and Chip Firing games. In Sections \ref{third section}, \ref{fourth section} and \ref{fifth section} we study the properties of three classes of lattices generated by CFGs on general graphs, undirected graphs and directed acyclic graphs, respectively. These sections are devoted to necessary and sufficient criteria for determining which class of lattices a given ULD lattice belongs to. In the conclusion we give some open problems which are currently in our interests.
\section{Preliminary definitions and previous results}
\label{second section}
\subsection{\small \textit{Notations and definitions}}
\hspace{1.ex}Let $L=(X,\leq)$ be a finite partial order ($X$ is equipped with a binary relation $\leq$ which is transitive, reflexive and antisymmetric). A subset $I$ of $X$ is called an \emph{ideal} of $L$ if for every $x \in I$ and $y \in X$ such that $y \leq x$ we have $y \in I$. For $x,y\in X$, $y$ is an \emph{upper cover} of $x$ if $x<y$ and for every $z\in X$, $x \leq z \leq y$ implies that $z=x$ or $z=y$. If $y$ is an upper cover of $x$ then $x$ is a \emph{lower cover} of y, and then we write $x \prec y$. A finite partial order is often presented by a Hasse diagram in which for each cover $x \prec y$ of $L$, there is a curve that goes upward from $x$ to $y$. The lattice $L$ is a \emph{lattice} if any two elements of $L$ have a least upper bound (\emph{join}) and a greatest lower bound (\emph{meet}). When $L$ is lattice, we have the following notations and denitions
\begin{itemize}
  \item $\textbf{0},\textbf{1}$ denote the minimum and the maximum of $L$. 
  \item for every $x,y \in X$, $x\lor y$ and $x \land y$ denote the join and the meet of $x,y$, respectively. 
  \item for $x\in X$, $x$ is a \emph{meet-irreducible} if it has exactly one upper cover. The element $x$ is a \emph{join-irreducible} if $x$ has exactly one lower cover. Let $M$ and $J$ denote the collections of the meet-irreducibles and the join-irreducibles of $L$, respectively. Let $M_x,J_x$ be given by $M_x=\set{m \in M: x \leq m}$ and $J_x=\set{j \in J: j \leq x}$. For $j \in J,m \in M$, if $j$ is a minimal element in $X\backslash  \set{ x \in X: x \leq m }$ then we write $j \downarrow m$. If $m$ is a maximal element in $X \backslash  \set{x \in X: j \leq x}$ then we write $j \uparrow m$, and $j \updownarrow m$ if $j \downarrow m$ and $j \uparrow m$. 
  \item The lattice $L$ is a \emph{distributive lattice} if it satisfies one of the following equivalent conditions
       \begin{itemize}
         \item[1. ] for every $x,y,z\in X$, we have $x \land (y \lor z)=(x \land y) \lor (x \land z)$. 
         \item[2. ] for every $x,y,z\in X$, we have $x \lor (y \land z)=(x \lor y) \land (x \lor z)$.
       \end{itemize}  
For a finite set $A$, $(2^{A},\subseteq)$ is a distributive lattice. A lattice generated in this way is called \emph{hypercube}. 
 \item for $x,y\in X$ satisfying $x \leq y$, $[x,y]$ stands for set $\set{z \in X: x \leq z \leq y}$. If $x \neq \textbf{1}$, $x^{+}$ denotes the join of all upper covers of $x$. Note that if $x$ is a meet-irreducible then $x^{+}$ is the unique upper cover of $x$. If $x\neq \textbf{0}$, $x^{-}$ denotes the meet of all lower covers of $x$. If $x$ is a join-irreducible then $x^{-}$ is the unique lower cover of $x$. The lattice $L$ is an \emph{upper locally distributive (ULD) lattice} \cite{M90,D40} if for every $x \in X$, $x \neq \textbf{1}$ implies the sublattice induced by $[x,x^{+}]$ is a hypercube. By dual notion, $L$ is a \emph{lower locally distributive (LLD) lattice} if for every $x\in X$, $x\neq \textbf{0}$ implies that the sublattice induced by $[x^{-},x]$ is a hypercube.
\end{itemize}  

Let $G$ be a directed multi-graph. For $v_1,v_2\in V$, $E(v_1,v_2)$ denotes the number of edges from $v_1$ to $v_2$. It follows that $E(v_1,v_1)$ is the number of loops at $v_1$. For $v \in V$, the out-degree of $v$, denoted by $deg^{+}(v)$, is defined by $deg^{+}(v)=\underset{v' \in V}{\sum}E(v,v')$ and the in-degree of $v$, denoted by $deg^{-}(v)$, is defined by $\deg^{-}(v) =\underset{v'\in V}{\sum}E(v',v)$. A vertex $v$ of $G$ is called \emph{sink} if it has no outgoing edge, \emph{i.e.} $deg^{+}(v)=E(v,v)$. A subset $C$ of $V(G)$ is a \emph{closed component} if $|C|\geq 2$, $C$ is a strongly connected component and there is no edge going from $C$ to a vertex outside of $C$. A CFG, which is defined on a graph having no closed component, always reaches a unique fixed point, moreover its configuration space is a ULD lattice  \cite{BL92,LP01}. If ${\rm CFG(G,c_0)}$ has a unique fixed point and ${\rm CFG(G,c_0)}$ is isomorphic to a ULD lattice $L$, we say ${\rm CFG(G,c_0)}$ generates $L$. Then we can identify the configurations of ${\rm CFG(G,c_0)}$ with the elements of $L$ (by an isomorphism).

\textbf{Remark. }Throughout this paper when ${\rm CFG(G,c_0)}$ generates $L$, the configurations in ${\rm CFG(G,c_0)}$ are automatically identified with the elements of $L$. All later arguments use this assumption.
\subsection{\small \textit{Previous results}}
\begin{theo}[Birkhoff \cite{B33}]
A lattice is distributive if and only if it is isomorphic to the lattice of the ideals of the order induced by its meet-irreducibles.
\end{theo}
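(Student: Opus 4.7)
The plan is to prove the two implications separately. The easy direction is showing that for any finite poset $P$, the lattice $\mathcal{I}(P)$ of ideals (downsets) of $P$ is distributive: since finite unions and intersections of downsets are again downsets, $\mathcal{I}(P)$ is a sublattice of the Boolean algebra $(2^P, \cup, \cap)$ and thus inherits the distributive identities automatically.

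For the converse, let $L$ be a finite distributive lattice and $(M, \leq)$ its poset of meet-irreducibles with the induced order. I would introduce the candidate isomorphism $\phi : L \to \mathcal{I}(M)$ defined by $\phi(x) = \{m \in M : x \not\leq m\}$ and first check that $\phi(x)$ is indeed a downset of $M$: if $m' \leq m$ in $M$ and $x \leq m'$ then $x \leq m$ by transitivity, so the contrapositive yields closure downwards. Next I would prove injectivity via the standard auxiliary fact that in any finite lattice $x = \bigwedge M_x$; this can be established by showing that whenever $x < y$ in $L$, any maximal element of $\{z \in L : z \geq x, \ y \not\leq z\}$ must be meet-irreducible (if it had two distinct upper covers, both would lie above $y$, and so would their meet, contradicting maximality), which produces an $m \in M_x \setminus M_y$. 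For surjectivity, given a downset $I \subseteq M$, I would set $x_I = \bigwedge (M \setminus I)$ and prove $\phi(x_I) = I$; the nontrivial inclusion $I \subseteq \phi(x_I)$ uses the distributivity lemma that for meet-irreducible $m$, $m \geq a \land b$ implies $m \geq a$ or $m \geq b$ (derived from $m = m \lor (a \land b) = (m \lor a) \land (m \lor b)$ together with meet-irreducibility), combined with $I$ being a downset.

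Finally, I would verify that $\phi$ is a lattice morphism. Join preservation $\phi(x \lor y) = \phi(x) \cup \phi(y)$ is immediate from the universal property of the join in $L$. The main obstacle is meet preservation $\phi(x \land y) = \phi(x) \cap \phi(y)$: the nontrivial inclusion reduces to showing that $x \land y \leq m$ with $m$ meet-irreducible forces $x \leq m$ or $y \leq m$, which is exactly the distributivity lemma above. This is the only step in the whole argument that cannot be carried out in an arbitrary finite lattice, so it is where distributivity is genuinely used.
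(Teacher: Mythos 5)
The paper does not prove this statement at all: it is quoted as a known result (Birkhoff's representation theorem, dual form) with a citation to \cite{B33}, so there is no in-paper argument to compare yours against. Judged on its own, your proof is correct and is the standard textbook argument, transposed from join-irreducibles to meet-irreducibles via the map $\phi(x)=M\setminus M_x$; the only places worth an explicit extra line are (i) the extension of the binary lemma ``$m\geq a\land b$ implies $m\geq a$ or $m\geq b$'' to arbitrary finite meets (needed for $x_I=\bigwedge(M\setminus I)$), (ii) the degenerate case $I=M$, where $M\setminus I=\emptyset$ and $x_I=\mathbf{1}$, which works because $\mathbf{1}$ has no upper cover and hence is not meet-irreducible under the paper's definition, and (iii) the equivalence, in a finite lattice, between the paper's definition of meet-irreducible (unique upper cover) and the meet-theoretic one ($m=u\land v$ forces $m=u$ or $m=v$), which your distributivity lemma tacitly invokes.
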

\begin{lem}[Caspard \cite{C98}]
\label{condition on cover relation for ULD lattice}
A lattice $L=(X,\leq)$ is upper locally distributive if and only if for any $x,y \in X$,
$$
x \prec y \Leftrightarrow M_y \subset M_x \text{ and } |M_x\backslash M_y|=1
$$
\end{lem}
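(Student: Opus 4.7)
The plan relies on two standard facts: in any finite lattice each element is the meet of the meet-irreducibles above it, so $x=\bigwedge M_x$; and $L$ is ULD precisely when for every $x\neq\mathbf{1}$ the interval $[x,x^{+}]$ is a Boolean lattice whose atoms are the upper covers of $x$.

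For the forward implication (ULD implies the biconditional), assume $L$ is ULD and $x\prec y$. The inclusion $M_y\subseteq M_x$ is immediate from $x\leq y$, and is strict because $M_x=M_y$ would give $x=\bigwedge M_x=\bigwedge M_y=y$. For $|M_x\setminus M_y|=1$ I identify $[x,x^{+}]$ with $2^{A}$ on the set $A$ of upper covers of $x$, and write $\bar y$ for the coatom complementary to the atom $y$. Existence of some $m\in M_x\setminus M_y$ follows by choosing a meet-irreducible above $\bar y$ but not above $y$, which must exist lest the meet decomposition of $\bar y$ put $y$ below $\bar y$. For uniqueness I would show that the map $m\mapsto m\wedge x^{+}$ sends $M_x\setminus M_y$ into $\{\bar y\}$ and is injective on meet-irreducibles: meet-irreducibility of $m$ in $L$, combined with its unique upper cover $m^{+}$, pins $m$ down from its restriction $m\wedge x^{+}$. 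The reverse direction of the inner biconditional, that $|M_x\setminus M_y|=1$ forces $x\prec y$, is general and uses no ULD hypothesis: writing $M_x\setminus M_y=\{m\}$ one gets $x=y\wedge m<y$, and no element strictly between $x$ and $y$ can have a compatible meet-irreducible profile.

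For the reverse implication (the biconditional implies ULD), fix $x\neq\mathbf{1}$. Let $y_1,\dots,y_k$ be the upper covers of $x$ and let $\{m_i\}=M_x\setminus M_{y_i}$; the $m_i$ are pairwise distinct because $y_i=\bigwedge(M_x\setminus\{m_i\})$ recovers $m_i$ from $y_i$, and a short argument shows $M_x\setminus M_{x^{+}}=\{m_1,\dots,m_k\}$. I then define $\Phi:[x,x^{+}]\to 2^{\{m_1,\dots,m_k\}}$ by $\Phi(z)=\{m_i:z\not\leq m_i\}$ and verify it is an order isomorphism: injectivity comes from $z=\bigwedge M_z$, and surjectivity is proved by induction on $|S|$, using the backward direction of the hypothesised biconditional at each step to climb one cover at a time through elements with the prescribed profile. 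This identifies $[x,x^{+}]$ with a hypercube and so establishes ULD. The main difficulty lies in the uniqueness step of the forward direction: translating the combinatorial cube fact ``an atom has a unique complementary coatom'' into uniqueness of $m\in M_x\setminus M_y$ relies essentially on meet-irreducibility in the ambient lattice $L$, which is what forces the restriction $m\mapsto m\wedge x^{+}$ to be injective into the coatoms of $[x,x^{+}]$.
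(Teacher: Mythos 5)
The paper offers no proof of this lemma --- it is imported verbatim from Caspard's thesis \cite{C98} --- so your proposal can only be judged on its own terms rather than against the authors' argument. Most of it is sound: deducing $x\prec y$ from $M_y\subset M_x$ and $|M_x\setminus M_y|=1$ works in any finite lattice via $x=\bigwedge M_x$ and the sandwich $M_y\subseteq M_z\subseteq M_x$; existence of some $m\in M_x\setminus M_y$ is fine; and the reverse implication, identifying $[x,x^{+}]$ with $2^{\{m_1,\dots,m_k\}}$ via $z\mapsto M_x\setminus M_z$ and climbing to realize each subset by joining with the appropriate upper cover $y_i$ (using $M_{z'\vee y_i}=M_{z'}\cap M_{y_i}$), goes through.

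The genuine gap sits exactly where you locate the ``main difficulty'': uniqueness of $m\in M_x\setminus M_y$ when $L$ is ULD. Your plan rests on two unproved claims. First, that every meet-irreducible $m$ with $x\leq m$ and $y\not\leq m$ satisfies $m\wedge x^{+}=\bar y$, i.e.\ lies above every other upper cover of $x$; this is not a local consequence of $[x,x^{+}]$ being a cube, and you give no argument for it. Second, and more seriously, the injectivity of $m\mapsto m\wedge x^{+}$ on $M_x\setminus M_y$ is the uniqueness statement in disguise: once two meet-irreducibles $m_1\neq m_2$ both restrict to $\bar y$, nothing in ``meet-irreducibility pins $m$ down from its restriction'' rules this out --- a meet-irreducible is in general not determined by its meet with $x^{+}$. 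The argument is circular at its crucial point. Closing it requires a global input, for instance: along any maximal chain $\mathbf{0}=c_0\prec c_1\prec\cdots\prec c_N=\mathbf{1}$ the sets $M_{c_i}\setminus M_{c_{i+1}}$ are nonempty, pairwise disjoint and partition $M$, so $N\leq |M|$ with equality if and only if every cover drops exactly one meet-irreducible; pairing this with the fact that a ULD lattice is graded of height $|M|$ (proved by a Jordan--H\"older-style transport of covers through the local cubes, in the spirit of Lemma \ref{lemma of the square connection}) gives uniqueness. As written, your forward direction establishes only $|M_x\setminus M_y|\geq 1$.
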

\begin{lem}[Latapy and Phan \cite{LP01}]
In a CFG reaching a unique fixed point, if two sequences of firing are starting at the same configuration and leading to the same configuration then for every $v \in V(G)$, the number of times v fired in each sequences are the same, where $G$ is the support on which the game is  defined. 
\end{lem}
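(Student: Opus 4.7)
The approach is to prove the lemma in two stages: first establish the special case where the common endpoint is the unique fixed point, then reduce the general case to this special case via an extension argument.

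Step one is a diamond (local confluence) lemma: if $v_1 \neq v_2$ are both firable at a configuration $c$, then firing them in either order is legal and leads to the same resulting configuration. The check is direct, since firing $v_2$ changes the chip count at $v_1$ by $E(v_2,v_1)\ge 0$ (as $v_1\neq v_2$), keeping $v_1$ firable, and symmetrically $v_2$ stays firable after $v_1$; the cumulative effect on each vertex does not depend on the order of the two firings. An immediate consequence is a monotonicity principle: if $v$ is firable at $c$ and $\tau$ is a legal firing sequence from $c$ that does not fire $v$, then $v$ remains firable at every configuration visited along $\tau$, since each firing of some $u\neq v$ only adds $E(u,v)\ge 0$ chips at $v$.

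Step two handles the special case in which both $\sigma_1$ and $\sigma_2$ reach the unique fixed point, by induction on $|\sigma_1|$. The base case is immediate: if $|\sigma_1|=0$ then the starting configuration is already the fixed point, no vertex is firable, and so $\sigma_2$ must also be empty. For the inductive step write $\sigma_1=v\sigma_1'$, and observe that $v$ must be fired somewhere in $\sigma_2$, for otherwise the monotonicity principle would keep $v$ firable all the way to the endpoint, contradicting its being a fixed point. Locate the first occurrence of $v$ in $\sigma_2$ and move it to the front by repeated applications of the diamond lemma; each swap is legal because monotonicity keeps $v$ firable at every configuration of $\sigma_2$ before that occurrence. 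The resulting sequence has the same firing multiset as $\sigma_2$ and begins with $v$, so cancelling the initial $v$ produces two legal sequences from the same intermediate configuration to the fixed point, one of them strictly shorter than $\sigma_1$; the inductive hypothesis then concludes the argument.

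For the general case, given $\sigma_1,\sigma_2$ from $c$ to an arbitrary common endpoint $c'$, extend each by a legal completion from $c'$ to the unique fixed point, producing legal sequences $\sigma_1\rho_1$ and $\sigma_2\rho_2$ both from $c$ to the fixed point. Step two applied at $c$ shows these two extensions have the same firing vector; applied at $c'$ to $\rho_1,\rho_2$ it shows the completions themselves share a firing vector; subtracting the two equalities yields the desired conclusion for $\sigma_1,\sigma_2$. The principal obstacle is justifying the leftward swaps in the inductive argument, which rests entirely on the monotonicity principle: because $v$ is not fired in the prefix of $\sigma_2$ leading to its first occurrence of $v$, the chip count at $v$ never drops below $c(v)\ge \deg^+(v)$, so $v$ remains available as a swap partner at every step and the diamond lemma applies throughout.
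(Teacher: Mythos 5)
The paper does not actually prove this lemma; it is quoted from \cite{LP01} with a citation only, so there is no in-paper argument to compare against. Your blind proof is correct and is the standard one for this fact (the ``abelian'' or strong-convergence property, essentially Eriksson's argument): the local diamond lemma is verified correctly (firing $v_2\neq v_1$ only adds $E(v_2,v_1)\geq 0$ chips at $v_1$, so firability is preserved and the net effect is order-independent), the monotonicity principle legitimately justifies every leftward swap when bubbling the first occurrence of $v$ to the front of $\sigma_2$, the induction on $|\sigma_1|$ for the fixed-point case is sound, and the reduction of the general case by completing both sequences to the unique fixed point is valid (completions exist because the game is assumed to converge, and the two completions' firing vectors agree by the special case applied at $c'$). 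One could streamline the last step by using a single common completion $\rho$ for both sequences, but your version with $\rho_1,\rho_2$ and a subtraction is equally correct. No gaps.
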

In a ${\rm CFG(G,c_0)}$ having a unique fixed point, for each $c$ being a configuration in ${\rm CFG(G,c_0)}$, the \emph{shotvector} of $c$, denoted by $\oneshotvector{c}$, assigns each vertex $v$ of $G$ to the number of times $v$ fired in any execution from the  configuration $c_0$  to $c$. Thus $\oneshotvector{c}$ is a map from $V(G)$ to $\mathbb{N}$. It follows from the above lemma that the shotvector of $c$ is well-defined. For $c_1,c_2\in {\rm CFG(G,c_0)}$ we write $\oneshotvector{c_1} \leq \oneshotvector{c_2}$ if for every $v \in V(G)$, $\shotvector{c_1}{v}\leq \shotvector{c_2}{v}$. It is known that $\oneshotvector{c_1}\leq \oneshotvector{c_2}$ if and only if $c_1$ can be transformed into $c_2$ by a sequence of firing \cite{LP01}. 

Throughout the coming sections, we always work with a general finite ULD lattice $L=(X,\leq)$. Recall that $M,J$ denote the collections of the meet-irreducibles and the join-irreducibles of $L$, respectively. The map $\mathfrak{m}:\set{(x,y): x \prec y \text{ holds in } L}\to M$ is given by $\mathfrak{m}(x,y)$ is the element in $M_x\backslash M_y$. All graphs are supposed to be directed multi-graphs. In a CFG if configuration $c$ can be transformed into $c'$ by firing some vertex in the support graph then we denote this unique vertex by $\vartheta(c,c')$. All CFGs, which are considered in this paper, are assumed to be reaching a fixed point. To denote a CFG, a configuration space and a lattice generated by a CFG, we will use the common notation ${\rm CFG(G,c_0)}$ since all of them are completely defined by $G$ and $c_0$.
\section{A necessary and sufficient condition for ${\rm L(CFG)}$}
\label{third section}
Given a ULD lattice $L$, is $L$ in ${\rm L(CFG)}$? This question was asked in \cite{MVP01}. Up to now, there exists no good criterion for  ${\rm L(CFG)}$ that suggests a polynomial-time algorithm for this computational problem. In this section we address this problem by giving a necessary and sufficient condition for ${\rm L(CFG)}$. We recall an important result in \cite{MVP01}
\begin{theo}[Magnien, Vuillon and Phan \cite{MVP01}]
\label{theorem of simple CFG}
Any CFG that reaches a unique fixed point is equivalent to a simple CFG
\end{theo}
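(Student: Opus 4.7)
The plan is to construct, from any CFG $(G,\mathcal{O})$ reaching a unique fixed point $c^\ast$, an equivalent simple CFG by \emph{unfolding} each vertex $v$ of $G$ into a chain of copies, one per firing of $v$ on the way to $c^\ast$.

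First I would invoke the previous lemma to see that the number $k_v$ of firings of $v$ in any execution from $\mathcal{O}$ to $c^\ast$ depends only on $v$, not on the execution chosen. I then build a new support graph $G'$ whose vertices are fresh copies $v_1,v_2,\ldots,v_{k_v}$ for each $v$ with $k_v\ge 1$ (sinks and vertices with $k_v=0$ being transferred verbatim). The edges and initial configuration $\mathcal{O}'$ of $G'$ are chosen so that: (i) each $v_j$ has out-degree $deg^{+}(v)$; (ii) $v_j$ cannot become firable until $v_{j-1}$ has fired (``activation''); and (iii) the aggregate chip transfer produced by firing $v_j$ coincides, after collapsing copies back to their originals, with the $j$-th firing of $v$ in $G$. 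Concretely, each edge $v\to w$ of $G$ is unfolded into an edge between appropriate copies of $v$ and $w$, with one outgoing chip of $v_{j-1}$ rerouted as an activation chip to $v_j$ and $\mathcal{O}'(v_j)$ adjusted to absorb the shift.

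Once $(G',\mathcal{O}')$ is in place, I would exhibit a lattice isomorphism $\Phi : CFG(G,\mathcal{O}) \to CFG(G',\mathcal{O}')$. Given $c \in CFG(G,\mathcal{O})$ with shot-vector $s = \oneshotvector{c}$, let $\Phi(c)$ be the configuration of $G'$ reached by firing $v_1,v_2,\ldots,v_{s_v}$ for each $v$. The design of $G'$ makes this sequence executable; conversely, the activation mechanism forces every reachable firing set of $G'$ to be downward-closed along each chain $v_1 \prec v_2 \prec \cdots \prec v_{k_v}$, so every reachable configuration of $G'$ arises in this way. A direct chip-counting argument then shows $\Phi$ preserves the cover relation, so $\Phi$ is a lattice isomorphism. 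Since each $v_j$ is fired at most once in any execution of $G'$, the new CFG is simple, completing the proof.

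The hard part is the precise engineering of the edges and of $\mathcal{O}'$. The tension is that $v_j$ needs an activation mechanism to prevent firing out of turn, yet $v_{j-1}$ must retain out-degree $deg^{+}(v)$ so that chip accounting stays consistent with $G$. Rerouting one outgoing chip of $v_{j-1}$ as an activation chip to $v_j$ breaks this tension provided $\mathcal{O}'(v_j)$ is adjusted by the corresponding amount, so that the net effect of firing $v_j$ matches the $j$-th firing of $v$ in $G$ exactly. Once this calibration is in place, executability of $\Phi(c)$ and preservation of cover relations both follow by induction on the length of an execution in $G$.
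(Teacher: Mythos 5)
The paper itself gives no proof of this theorem --- it is imported verbatim from \cite{MPV01} --- so the only question is whether your unfolding construction is sound. The strategy (split $v$ into copies $v_1,\dots,v_{k_v}$, one per firing, with a precedence mechanism) is the right one, but the two concrete design choices you commit to both fail. First, the edge routing: you unfold each edge $v\to w$ of $G$ into \emph{one} edge ``between appropriate copies'' and insist that $v_j$ keep out-degree $deg^{+}(v)$. There is no static choice of appropriate copy, because which firing of $w$ ``consumes'' the chip emitted by the $j$-th firing of $v$ depends on how the firings of $v$ and $w$ interleave, and that varies across executions. The copy $w_i$ has a single linear threshold, and the condition it must detect is ``the total inflow into $w$ so far is at least $i\cdot deg^{+}(w)-\mathcal{O}(w)$''; to measure that total it must receive $E(v,w)$ chips from \emph{every} fired copy $v_j$ of every in-neighbour $v$. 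So the correct construction puts $E(v,w)$ parallel edges from each $v_j$ to each $w_i$, which makes $deg^{+}(v_j)$ much larger than $deg^{+}(v)$; your constraint (i) must be dropped and the thresholds recalibrated through $\mathcal{O}'$ and padding edges to a sink.

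Second, the activation mechanism. A single chip rerouted from $v_{j-1}$ to $v_j$ does not enforce the precedence, because a firable vertex in a CFG need not fire: the other in-neighbours of $v_j$ can legally keep firing and over-deliver, pushing $v_j$ past its threshold while $v_{j-1}$ is still waiting. Then a configuration of $G'$ in which $v_j$ has fired but $v_{j-1}$ has not is reachable; it has no counterpart in $CFG(G,\mathcal{O})$, the fired sets are no longer staircases, and the two lattices are not isomorphic. A single threshold on a sum can only encode the conjunction ``$v_{j-1}$ has fired \emph{and} enough inflow has arrived'' if the activation contribution dominates everything else, so the edge $v_{j-1}\to v_j$ must be given multiplicity strictly larger than the total possible inflow into $v_j$ from all other sources, with $\mathcal{O}'(v_j)$ set so that $v_j$ needs exactly that activation bundle plus $\max\set{0,\, j\cdot deg^{+}(v)-\mathcal{O}(v)}$ further chips. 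With these two repairs the map $\Phi$, its surjectivity, and the preservation of covers go through essentially as you describe.
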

From now until the end of this section, all CFGs are supposed to be simple. The following lemma is known in \cite{FK09}. Since it will play an important role in this paper and its proof is simple, it is presented here with a proof.
\begin{lem}[Felsner and Knauer \cite{FK09}]
\label{lemma of the square connection}
Let $a,b$ be two elements of $L$ such that $a \prec b$. Let $m$ denote $\mathfrak{m}(a,b)$. Then for any chain $a=x_1 \prec x_2 \prec \cdots \prec x_k=m$ in $L$, there exists a chain $b=y_1 \prec y_2 \prec \cdots \prec y_k=\successor{m}$ in $L$ such that $x_i \prec y_i$ for every $1\leq i \leq k$. Moreover, $\mathfrak{m}(x_i,y_i)=m$  for every $1\leq i\leq k$. 
\end{lem}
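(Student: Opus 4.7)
The plan is to prove the claim by induction on $i$, constructing $y_{i+1}$ from $y_i$ and $x_{i+1}$ as the join $y_{i+1} := y_i \vee x_{i+1}$, and then verifying the desired cover relations by comparing sets of meet-irreducibles, using the Caspard characterization of ULD lattices quoted above. The base case is trivial: $y_1 := b$, and by definition $\mathfrak{m}(x_1,y_1) = \mathfrak{m}(a,b) = m$.

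For the inductive step, suppose $x_i \prec y_i$ with $\mathfrak{m}(x_i,y_i) = m$. The Caspard lemma, applied to both $x_i \prec y_i$ and $x_i \prec x_{i+1}$, gives $M_{y_i} = M_{x_i}\setminus\{m\}$ and $M_{x_{i+1}} = M_{x_i}\setminus\{m'\}$, where $m' := \mathfrak{m}(x_i,x_{i+1})$. Here I would make the simple but crucial observation that $m' \neq m$: indeed $x_{i+1} \leq x_k = m$, so $m \in M_{x_{i+1}}$, whereas $m' \notin M_{x_{i+1}}$ by definition. Now define $y_{i+1} := y_i \vee x_{i+1}$. Using the general lattice identity $M_{u\vee v} = M_u \cap M_v$, I compute
\[
M_{y_{i+1}} \;=\; M_{y_i} \cap M_{x_{i+1}} \;=\; \bigl(M_{x_i}\setminus\{m\}\bigr) \cap \bigl(M_{x_i}\setminus\{m'\}\bigr) \;=\; M_{x_i}\setminus\{m,m'\}.
\]
Therefore $M_{x_{i+1}}\setminus M_{y_{i+1}} = \{m\}$ (a singleton since $m \neq m'$) and $M_{y_i}\setminus M_{y_{i+1}} = \{m'\}$. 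Invoking the Caspard lemma in the reverse direction yields simultaneously $x_{i+1} \prec y_{i+1}$ with $\mathfrak{m}(x_{i+1},y_{i+1}) = m$, and $y_i \prec y_{i+1}$, which is exactly what the induction step requires.

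Finally, I would close the argument by identifying $y_k$. At the end of the induction we obtain $x_k = m \prec y_k$; since $m$ is a meet-irreducible, it has only one upper cover, namely $m^+ = \successor{m}$, forcing $y_k = \successor{m}$.

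I do not anticipate a real obstacle: the whole argument reduces to tracking meet-irreducible sets along the constructed joins, and the Caspard lemma converts the computation of $M_{y_{i+1}}$ into both cover relations we need at once. The only point requiring a moment of care is the verification $\mathfrak{m}(x_i,x_{i+1}) \neq m$, which is what prevents the singleton $M_{x_{i+1}}\setminus M_{y_{i+1}}$ from collapsing and keeps the chain in step.
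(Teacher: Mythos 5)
Your proof is correct and follows essentially the same route as the paper: both proceed by induction along the chain, completing each square by producing the unique common upper cover $y_{i+1}$ of $y_i$ and $x_{i+1}$, which is exactly the join you take. The only difference is presentational: where the paper invokes the hypercube structure of the ULD interval and asserts that the label is preserved ``easily,'' you verify both cover relations and the equality $\mathfrak{m}(x_{i+1},y_{i+1})=m$ explicitly via Caspard's meet-irreducible criterion, including the necessary observation that $\mathfrak{m}(x_i,x_{i+1})\neq m$.
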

\begin{proof}
It's clear that $x_2 \neq y_1$. Since $L$ is a ULD lattice, there exists a unique $y_2$ such that $y_1 \prec y_2$ and $x_2 \prec y_2$. It follows easily that $\labelmap{x_1,y_1}=\labelmap{x_2,y_2}=m$. If $k=2$ then $y_2=\successor{m}$. Otherwise  repeat the previous argument starting with $x_2,y_2$ until the index reaches $k$. We obtain the sequence $b=y_1 \prec y_2 \prec \cdots \prec y_k=\successor{m}$ which has the desired property.
\end{proof}
\begin{figure}[!h]
\centering
\includegraphics[bb=112 17 268 262,width=1.19in,height=1.88in,keepaspectratio]{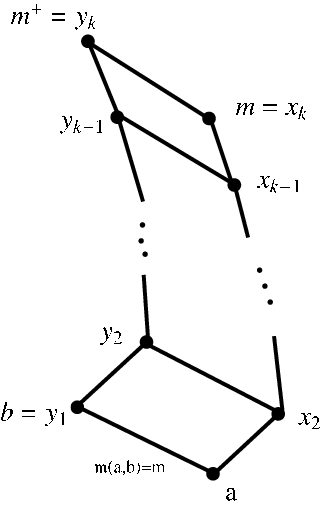}
\end{figure}
\begin{lem}
\label{lemma of relation}
Let $L$ be a ULD lattice generated by ${\rm CFG(G,c_0)}$ and let $\mathcal{V}$ denote the set of vertices which are fired in ${\rm CFG(G,c_0)}$. For each $c \in {\rm CFG(G,c_0)}$, $\vartheta(c)$ denotes the set of vertices which are fired to obtain $c$.   Then
\begin{itemize}
  \item[1. ]  The map $\kappa: M\to \mathcal{V}$ determined by $\forall m \in M, \kappa(m)=\vartheta(c,c')$, where $c,c'$ are two elements in $L$ such that $c \prec c'$ and $\meetirrdiff{c}{c'}=m$, is well-defined. Furthermore $\kappa$ is a bijection.
  \item[2. ] For every $c\in {\rm CFG(G,c_0)}$, $\vartheta(c)=\kappa(M\backslash  M_{c}) $.
\end{itemize}  
\end{lem}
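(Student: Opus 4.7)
The plan is to first secure well-definedness of $\kappa$, then deduce part 2 by tracking a maximal chain from $\mathcal{O}$ to $c$, and finally read off bijectivity by specializing part 2 to $c=\mathbf{1}$. Surjectivity of $\kappa$ will fall out of the latter, and $\kappa$ being injective will follow because the same chain argument forces $|M|=|\mathcal{V}|$.

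For well-definedness, I would fix $m\in M$ and an arbitrary cover $c\prec c'$ with $\meetirrdiff{c}{c'}=m$, and show $\vartheta(c,c')=\vartheta(m,\successor{m})$. Since the ULD characterization of covers gives $m\in M_c$, we have $c\le m$, so I can choose a maximal chain $c=x_1\prec x_2\prec\cdots\prec x_k=m$ and apply Lemma \ref{lemma of the square connection} to lift it to a parallel chain $c'=y_1\prec y_2\prec\cdots\prec y_k=\successor{m}$ with $\meetirrdiff{x_i}{y_i}=m$ for every $i$. Then I would argue by induction on $i$ that $\vartheta(x_i,y_i)=\vartheta(x_{i+1},y_{i+1})$. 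For a single step, set $u=\vartheta(x_i,x_{i+1})$ and $v=\vartheta(x_i,y_i)$; because firing operates by adding a fixed vector $\delta_w$ to the configuration, we have $x_{i+1}=x_i+\delta_u$ and $y_i=x_i+\delta_v$, and the only common upper cover of these two elements inside the hypercube $[x_i,x_i^{+}]$ is $x_i+\delta_u+\delta_v$, which must therefore coincide with $y_{i+1}$. This forces $\vartheta(x_{i+1},y_{i+1})=v$, completing the inductive step and giving $\vartheta(c,c')=\vartheta(x_k,y_k)=\vartheta(m,\successor{m})$.

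For part 2, fix $c$ and a maximal chain $\mathcal{O}=c_0\prec c_1\prec\cdots\prec c_n=c$. By the ULD cover characterization, the labels $m_i=\meetirrdiff{c_{i-1}}{c_i}$ are pairwise distinct and exhaust $M\setminus M_c$, while simplicity of the CFG makes the firings $v_i=\vartheta(c_{i-1},c_i)$ pairwise distinct with $\{v_1,\dots,v_n\}=\vartheta(c)$. Well-definedness then gives $\kappa(m_i)=v_i$, so $\kappa(M\setminus M_c)=\vartheta(c)$. Specializing to $c=\mathbf{1}$ yields $\kappa(M)=\mathcal{V}$ and $|M|=|\mathcal{V}|=n$; combined with $\kappa$ being determined value-by-value on each $m_i$ in this enumeration of $M$, the map is a bijection. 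The main obstacle is the diamond step inside well-definedness: it is the place where I must genuinely exploit the chip-firing structure (addition of a fixed vector $\delta_w$) to conclude that opposite sides of a ULD cover-square carry the same firing vertex. Everything else is accounting on top of Lemma \ref{lemma of the square connection} and the ULD characterization of covers.
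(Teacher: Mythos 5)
Your proposal is correct and follows essentially the same route as the paper: well-definedness via the parallel chains of Lemma \ref{lemma of the square connection}, bijectivity by counting $|M|=|\mathcal{V}|$ along a maximal chain using simplicity, and part 2 by matching cover labels to firing vertices along a chain. The only differences are cosmetic — you spell out the diamond step that the paper leaves implicit, and you derive surjectivity from part 2 rather than asserting it directly — neither of which changes the argument.
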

\begin{proof}$\text{ }$\\
\begin{itemize}
\item[1. ] The map $\kappa$ is defined on whole $M$ since for every $m\in M$, $\mathfrak{m}(m,m^{+})=m$. To prove $\kappa$ is well-defined, it suffices to show that for each $m\in M$, if $\labelmap{a,b}=m$ then $\vartheta(a,b)=\vartheta(m,m^{+})$.  Let $a=x_1 \prec x_2 \prec \cdots \prec x_k=m$. By Lemma \ref{lemma of the square connection}, there exists $b=y_1 \prec y_2 \cdots \prec y_k=\successor{m}$ such that for every $1 \leq i \leq k$, we have $ x_i \prec y_i$. Therefore $\vartheta (a,b)=\vartheta (x_1,y_1)=\vartheta (x_2,y_2)=\cdots=\vartheta (x_k,y_k)=\vartheta (m,\successor{m})$.

\hspace{3.ex} Clearly $\kappa$ is surjective. To prove $\kappa$ is bijective, it suffices to show that $|M|=|\mathcal{V}|$. Let $\textbf{0}=c_0\overset{v_1}{\to}c_1 \overset{v_2}{\to}c_2\overset{v_3}{\to}\cdots \to c_{N-1}\overset{v_N}{\to}c_N=\textbf{1}$ be an execution to obtain the fixed point. Since $M_\textbf{0}=M,M_\textbf{1}=\emptyset$ and for every $0 \leq i \leq N-1$, $|M_{c_i}\backslash M_{c_{i+1}} |=1$, it follows that $N=|M|$, therefore $|\mathcal{V}|=|M|$
\item[2. ] Let $\textbf{0}=d_0\overset{v_1}{\to}d_1\overset{v_2}{\to}d_2\overset{v_3}{\to}\cdots\overset{v_k}{\to}d_k=c\overset{v_{k+1}}{\to}d_{k+1}\to\cdots \overset{d_{N-1}}{\to} d_{N-1}\overset{v_N}{\to}d_N=\textbf{1}$ be an execution to obtain the fixed point. It's clear that $\vartheta(c)=\set{v_1,v_2,\cdots,v_k}$, therefore $\vartheta(c)=\set{v_1,v_2,\cdots,v_N}\backslash \{v_{k+1},v_{k+2},\cdots,$ \linebreak $v_N\}$. By the definition of $\kappa$, we have $\kappa(M)=\set{v_1,v_2,\cdots,v_N}$ and $\{v_{k+1},v_{k+2},\cdots,v_N\}=\set{\kappa(\labelmap{d_i,d_{i+1}}):k \leq i \leq N-1}=\kappa(\set{\labelmap{d_i,d_{i+1}}: k \leq i\leq N-1})=\kappa\big(\underset{k \leq i \leq N-1}{\bigcup} M_{d_i}\backslash M_{d_{i+1}}\big)=\kappa(M_{d_{k}}\backslash M_{d_N})=\kappa(M_{d_{k}})=\kappa(M_c)$. It follows that $\vartheta(c)=\kappa(M)\backslash \kappa(M_c)=\kappa(M\backslash M_c)$ since $\kappa$ is bijective. 
\end{itemize}
\end{proof}

The lemma means that if $L$ is generated by a CFG then each meet irreducible of $L$ can be considered as a vertex of its support graph. It is an important point to set up a criterion for ${\rm L(CFG)}$. For better understanding, we give an example for this correspondence. The CFG which is defined on the support graph and the initial configuration shown in Figure \ref{fig:image38} and Figure \ref{fig:image39}
\begin{figure}
\centering
\subfloat[Support graph]{\label{fig:image38}\includegraphics[bb=1 4 134 140,width=1.19in,height=1.22in,keepaspectratio]{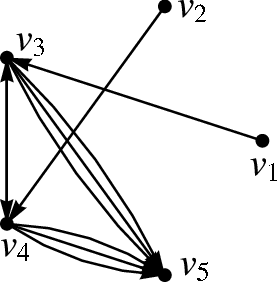}}
\quad
\subfloat[Initial configuration]{\label{fig:image39}\includegraphics[bb=9 11 189 199,width=1.27in,height=1.33in,keepaspectratio]{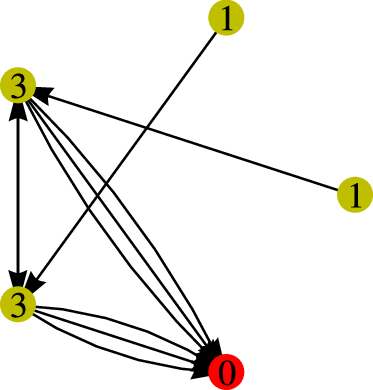}}
\caption{An example of Chip firing game}
\end{figure}
\noindent generates the lattice represented in Figure \ref{fig:image16}.\begin{figure}
\centering
\subfloat[Cover relation labeled by firing vertices]{\label{fig:image16}\includegraphics[bb=13 7 214 268,width=1.45in,height=1.88in,keepaspectratio]{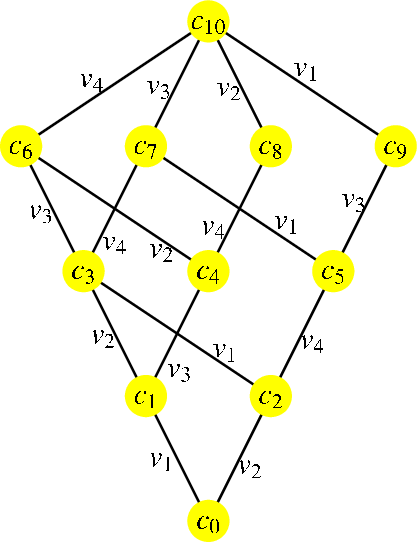}}
\quad
\subfloat[Configurations labeled by fired vertices]{\label{fig:image19}\includegraphics[bb=4 0 249 292,width=1.58in,height=1.88in,keepaspectratio]{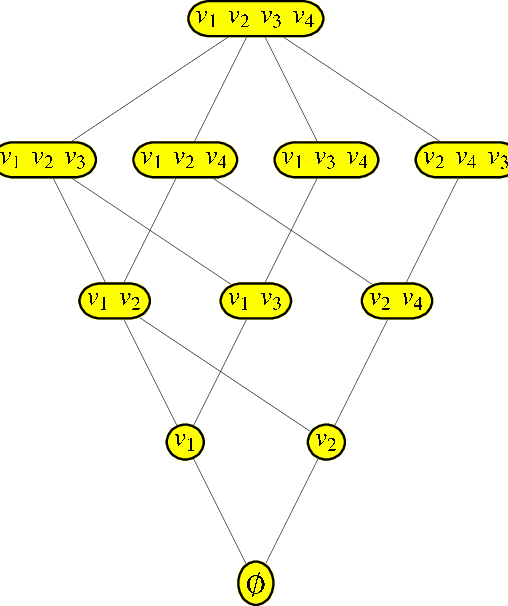}}
\caption{firing-vertex labeling}
\end{figure}
\begin{figure}
\centering
\subfloat[Cover relation labeled by meet-irreducibles]{\label{fig:image17}\includegraphics[bb=13 11 222 270,width=1.52in,height=1.88in,keepaspectratio]{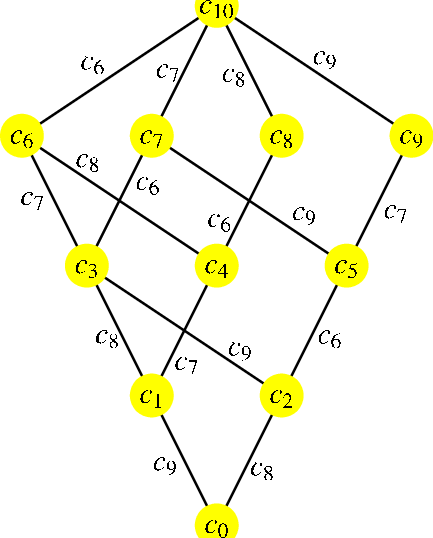}}
\quad
\subfloat[Configurations labeled by meet-irreducibles]{\label{fig:image18}\includegraphics[bb=4 0 232 271,width=1.58in,height=1.88in,keepaspectratio]{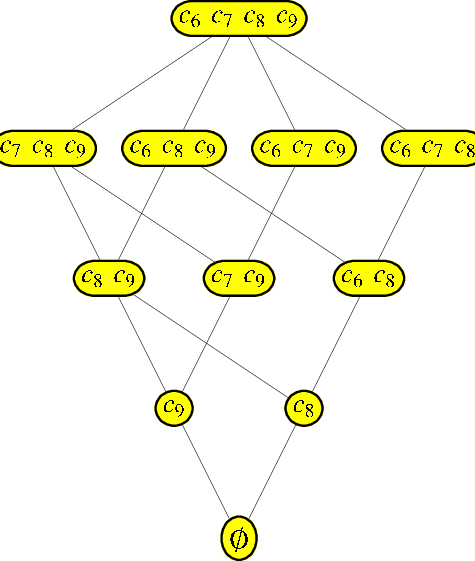}}
\caption{meet-irreducible labeling}
\end{figure}
\noindent In Figure \ref{fig:image16}, each $c_i \prec c_j$ is labeled by the vertex which is fired in $c_i$ to obtain $c_j$. The lattice in Figure \ref{fig:image17} is the same as one in Figure \ref{fig:image16} but each $c_i \prec c_j$ is labeled by $\mathfrak{m}(c_i,c_j)$. Figure \ref{fig:image19} shows the lattice in the way each configuration is presented by the set of vertices which are fired to obtain this configuration. In Figure \ref{fig:image18}, each configuration $c$ is presented by $M\backslash M_c$.  Clearly the labelings in Figure \ref{fig:image16} and Figure \ref{fig:image17} are the same, the presentations in Figure \ref{fig:image19} and Figure \ref{fig:image18} are the same too with respect to the correspondence $\kappa$ defined by $\kappa(c_{6})=v_4,\kappa(c_{7})=v_3,\kappa(c_{8})=v_2,\kappa({c_9})=v_1$.

For each $m \in M$, $\mathfrak{U}_m$ denotes the collection of all minimal elements of $\set{x \in X: \exists y \in X, x \prec y \text{ and } \mathfrak{m}(x,y)=m}$ and $\mathfrak{L}_m$ denotes the collection of all maximal elements of $X\backslash \underset{a \in \mathfrak{U}_m}{\bigcup} \set{x \in X: a\leq x}$. 

Let us explain in a few words why we define these notations. Suppose that $L$ is generated by ${\rm CFG(G,c_0)}$. For a vertex $v$ fired in the game, we consider all configurations in ${\rm CFG(G,c_0)}$ which have enough chips stored at $v$ in order that $v$ can be fired. If we only care about the firability of $v$, we only need to consider the collection $\mathfrak{U}_v$ of all  minimal configurations of these configurations. The configurations, which are not greater than equal to any configuration in $\mathfrak{U}_v$, do not have enough chips stored at $v$ in order that $v$ can be fired. We only need to consider the collection $\mathfrak{L}_v$ of all maximal configurations of these configurations to know the firability of $v$. Sets $\mathfrak{U}_{v},\mathfrak{L}_v$ are exactly $\mathfrak{U}_{\kappa^{-1}(v)}$, $\mathfrak{L}_{\kappa^{-1}(v)}$, respectively in $L$. Note that $\mathfrak{U}_m,\mathfrak{L}_m$ only depend on $L$, not depend on the CFGs generating $L$, even if there exists no such CFGs. 

For each $m \in M$ the system of linear inequalities $\mathcal{E}(m)$ is given by
$$
\mathcal{E}(m)=
\begin{cases}
\set{w-\underset{x \in M\backslash M_a}{\sum} \down{x}\geq 1: a \in \mathfrak{L}_m}\bigcup \set{w\leq \underset{x \in M\backslash M_a}{\sum}\down{x} : a \in \mathfrak{U}_m}&\text{ if }\mathfrak{U}_m\neq \set{\textbf{0}}\\
\set{w\geq 1}&\text{ if }\mathfrak{U}_m=\set{\textbf{0}}
\end{cases}
$$
where $w$ is an added variable. The collection of all variables of $\mathcal{E}(m)$ is $\set{w}\cup \set{\down{x}: x \in \underset{a\in \mathfrak{U}_m\cup \mathfrak{L}_m}{\bigcup}(M\backslash M_a)}$. It follows from the definitions of $\mathfrak{U}_m$ and $\mathfrak{L}_m$ that if $\down{x}$ is in the collection of all variables of $\mathcal{E}(m)$ then $x \neq m$. Note that $\mathcal{E}(m)=\set{w \geq 1}$ if and only if there exists $x \in X$ such that $\textbf{0} \prec x$ and $\mathfrak{m}(\textbf{0},x)=m$.

\textbf{Remark.} When $L$ is generated by some CFG, Lemma \ref{lemma of relation} means that each $m \in M$ can be regarded as a vertex of this CFG. The system of linear inequalities $\mathcal{E}(m)$ describes the firability of $m$ in the following meaning. In order that $m$ can be fired, $m$ receives at least $w$ chips from its neighbors. Each $e_x$ in $\mathcal{E}(m)$ indicates the number of chips that $x$ sends to $m$ when it is fired. For each $a \in \mathfrak{U}_m\cup \mathfrak{L}_m$ when all vertices in $M\backslash M_a$ are fired, the game arrives at the configuration $a$, and $m$ receives $\underset{x \in M\backslash M_a}{\sum} e_x$ chips from its neighbors. The vertex $m$ is not firable in each $a \in \mathfrak{L}_m$, therefore $w-\underset{x \in M\backslash M_a}{\sum}e_x\geq 1$. Similarly $m$ is firable in each $ a \in \mathfrak{U}_m$, therefore $w \leq \underset{x \in M\backslash M_a}{\sum} e_x$. 
\begin{examp}
\label{first example}
We consider again the lattice presented in Figure \ref{fig:image17}.  We have $M=\set{c_6,c_7, c_{8}, c_{9}}, \mathfrak{U}_{c_8}=\mathfrak{U}_{c_9}=\set{c_0} ,\mathfrak{U}_{c_6}= \set{c_2,c_4}, \mathfrak{U}_{c_7}=\set{c_1,c_5}, \mathfrak{L}_{c_8}=\mathfrak{L}_{c_{9}}=\emptyset,\mathfrak{L}_{c_6}=\set{c_1},\mathfrak{L}_{c_7}=\set{c_2}$ . Then 
$$
\mathcal{E}(c_8)=\mathcal{E}(c_9)=\set{w\geq 1}
$$
$$
\mathcal{E}(c_6)=\left\{
\begin{array}{c}
w \leq \down{c_8}\\
w \leq \down{c_7}+\down{c_9}\\
w-\down{c_9}\geq 1
\end{array}
\right. ;
\mathcal{E}(c_7)=\left\{
\begin{array}{c}
w \leq \down{c_9}\\
w\leq \down{c_6}+\down{c_8}\\
w-\down{c_8}\geq 1
\end{array}
\right. 
$$
\end{examp}

\begin{lem}
\label{lemma of solution}
If $L \in {\rm L(CFG)}$ then for every $m \in M$,  $\mathcal{E}(m)$ has non-negative integral solutions. 
\end{lem}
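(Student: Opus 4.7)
The plan is to read off an explicit non-negative integer solution of $\mathcal{E}(m)$ directly from a chip-firing game generating $L$. By Theorem \ref{theorem of simple CFG} I may assume that $L$ is generated by a simple $CFG(G, \mathcal{O})$, and by Lemma \ref{lemma of relation} there is a bijection $\kappa: M \to \mathcal{V}$ with $\vartheta(c) = \kappa(M \setminus M_c)$ for every configuration $c$. Fix $m \in M$, write $v := \kappa(m)$, and take as my candidate assignment
$$w := \deg^{+}(v) - \mathcal{O}(v), \qquad \down{x} := E(\kappa(x), v) \quad \text{for } x \in M \setminus \set{m},$$
with the convention that in the trivial case $\mathfrak{U}_m = \set{\textbf{0}}$ I instead set $w := 1$.

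The first step is to recognise $\mathfrak{U}_m$ and $\mathfrak{L}_m$ dynamically. Since a cover $c \prec c'$ is labelled by $m$ precisely when $\vartheta(c, c') = v$, the set underlying the definition of $\mathfrak{U}_m$ is exactly the set of configurations at which $v$ is firable and has not yet been fired. Hence $\mathfrak{U}_m$ is the set of minimal such configurations, and $\mathfrak{L}_m$ is the set of maximal configurations at which $v$ is \emph{not} firable and has not yet been fired. The crucial technical point is that every $a \in \mathfrak{U}_m \cup \mathfrak{L}_m$ satisfies $v \notin \vartheta(a)$, equivalently $m \in M_a$, equivalently $a \leq m$. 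For $a \in \mathfrak{U}_m$ this is immediate. For $a \in \mathfrak{L}_m$, if instead $v \in \vartheta(a)$, then picking an execution $\mathcal{O} \to \cdots \to a$ and looking at the configuration $c$ reached immediately before $v$ fires, one has $c \leq a$ and $c$ in the up-closure of $\mathfrak{U}_m$, contradicting $a \notin \bigcup_{b \in \mathfrak{U}_m} \set{x : b \leq x}$. This is exactly where the simple-CFG hypothesis is used: each vertex fires at most once along every execution.

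The second step is to translate firability at such configurations into the inequalities of $\mathcal{E}(m)$. Because $v$ has not yet fired at $a$, standard chip-firing bookkeeping gives
$$c_a(v) = \mathcal{O}(v) + \sum_{u \in \vartheta(a)} E(u, v) = \mathcal{O}(v) + \sum_{x \in M \setminus M_a} \down{x},$$
with the variable $\down{m}$ absent since $m \in M_a$. Firability of $v$ at $a \in \mathfrak{U}_m$ is $c_a(v) \geq \deg^{+}(v)$, which rearranges to $w \leq \sum_{x \in M \setminus M_a} \down{x}$; non-firability at $a \in \mathfrak{L}_m$ is $c_a(v) < \deg^{+}(v)$, which rearranges to $\sum_{x \in M \setminus M_a} \down{x} < w$. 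These are precisely the constraints of $\mathcal{E}(m)$.

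Finally I verify non-negativity. The edge counts $\down{x} = E(\kappa(x), v)$ are non-negative integers by definition. If $\mathfrak{U}_m = \set{\textbf{0}}$ then $v$ is firable at $\mathcal{O}$, so $\deg^{+}(v) \geq 1$ and the single constraint $w \geq 1$ is satisfied by $w := 1$; otherwise $\textbf{0} \notin \mathfrak{U}_m$, so $v$ is not firable at $\mathcal{O}$, giving $\mathcal{O}(v) < \deg^{+}(v)$ and hence $w = \deg^{+}(v) - \mathcal{O}(v) \geq 1$. The main obstacle I anticipate is the dynamical characterisation of $\mathfrak{L}_m$ in the second paragraph, since it requires reasoning along executions of the CFG rather than purely within the order structure of $L$, and it is the step that genuinely invokes simplicity.
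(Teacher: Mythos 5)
Your proof is correct and follows essentially the same route as the paper's: both read off the explicit solution $\down{x}=E(\kappa(x),\kappa(m))$, $w=\deg^{+}(\kappa(m))-\mathcal{O}(\kappa(m))$ from a simple CFG via the bijection $\kappa$ of Lemma \ref{lemma of relation}, and verify the inequalities by firability of $\kappa(m)$ at the elements of $\mathfrak{U}_m$ and non-firability at the elements of $\mathfrak{L}_m$. The only difference is one of exposition: you spell out the dynamical identification of $\mathfrak{U}_m$ and $\mathfrak{L}_m$ (and the fact that $\kappa(m)$ has not yet fired there), which the paper treats as an immediate consequence of the definitions and of Lemma \ref{lemma of relation}.
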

\begin{proof}

Let ${\rm CFG(G,c_0)}$ be a CFG that generates $L$. If $\mathcal{E}(m)=\set{w \geq 1}$ then clearly it has a non-negative integral solution. Otherwise, let $ f_m: \set{\down{x}: x \in \underset{a\in \mathfrak{L}_m \cup \mathfrak{U}_m}{\bigcup} (M\backslash M_a)}\cup \set{w}\to \mathbb{N}$ be given by
 $$
f_m(y)=\begin{cases}
E(\kappa(x),\kappa(m))&\text{ if } y=\down{x} \text{ for some } x \in \underset{a \in \mathfrak{U}_m\cup \mathfrak{L}_m}{\bigcup} (M\backslash M_a) \\
deg^{+}(\kappa(m))-c_0(\kappa(m))&\text{ if } y=w
\end{cases}$$
where $\kappa$ is the map which is defined as in Lemma \ref{lemma of relation}. Note that since $\mathcal{E}(m)\neq \set{w\geq 1}$, $\kappa(m)$ cannot be fired at the beginning of the game, therefore $\deg^{+}(\kappa(m))-c_0(\kappa(m))>0$.

We show that $f_m$ is a solution of $\mathcal{E}(m)$. Indeed let $a\in \mathfrak{U}_m$. By Lemma \ref{lemma of relation} the set of vertices which are fired to obtain $a$ is $\kappa(M\backslash M_a)$. After firing all vertices in $\kappa(M\backslash M_a)$ $\kappa(m)$ receives $\underset{x\in M\backslash M_a}{\sum} E(\kappa(x),\kappa(m))$ chips from its neighbors. Since $\kappa(m)$ is firable in $a$, it follows that $\underset{x\in M\backslash M_a}{\sum} f_m(\down{x})= \underset{x\in M\backslash M_a}{\sum} E(\kappa(x),\kappa(m))\geq deg^{+}(\kappa(m))-c_0(\kappa(m))=f_m(w)$. It remains to prove that for $a'\in \mathfrak{L}_m$, we have $\underset{x\in M\backslash M_{a'}}{\sum}f(\down{x})<f(w)$. It follows from the definition of $\mathfrak{L}_m$ and from Lemma \ref{lemma of relation} that $\kappa(m)$ is not firable in $a'$. By a similar argument we have $\underset{x\in M\backslash M_{a'}}{\sum}f_m (\down{x})=\underset{x\in M\backslash M_{a'}}{\sum}E(\kappa(x),\kappa(m))<deg^{+}(\kappa(m))-c_0(\kappa(m))=f_m(w)$.
\end{proof}
\begin{theo}
\label{the condition of lattices induced by CFG}
$L$ is in ${\rm L(CFG)}$ if and only if for each $m \in M$, $\mathcal{E}(m)$ has non-negative integral solutions.
\end{theo}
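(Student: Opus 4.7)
The ``only if'' direction is precisely Lemma~\ref{lemma of solution}, so the plan focuses on the converse. Fix a non-negative integer solution $f_m$ of $\mathcal{E}(m)$ for each $m \in M$. I will construct a simple CFG $CFG(G,\mathcal{O})$ and exhibit an isomorphism with $L$. For the support graph take $V(G) = M \cup \set{s}$ with $s$ a fresh sink, and for distinct $x, m \in M$ set $E(x,m) := f_m(\down{x})$ if $\down{x}$ is a variable of $\mathcal{E}(m)$, and $0$ otherwise. The out-degrees of the vertices in $M$ are then inflated as needed by adding edges to $s$, and the initial configuration is defined by $\mathcal{O}(m) := deg^{+}(m) - f_m(w)$ when $\mathfrak{U}_m \neq \set{\textbf{0}}$, and $\mathcal{O}(m) := deg^{+}(m)$ when $\mathfrak{U}_m = \set{\textbf{0}}$; a sufficiently large $deg^{+}(m)$ simultaneously makes $\mathcal{O}(m)$ non-negative and prevents $m$ from ever reaccumulating $deg^{+}(m)$ chips after firing once, so the resulting CFG is automatically simple.

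The candidate isomorphism $\phi : L \to CFG(G,\mathcal{O})$ sends $c$ to the configuration obtained from $\mathcal{O}$ by firing the vertices of $M \setminus M_c$ along any maximal chain $\textbf{0} = c_0 \prec c_1 \prec \cdots \prec c_k = c$ in $L$, taking $\mathfrak{m}(c_{i-1}, c_i)$ at step $i$. The core claim is that, for every $c \in L$ and every $m \in M_c$, the vertex $m$ is firable in $\phi(c)$ if and only if $c \geq a$ for some $a \in \mathfrak{U}_m$, which by the definition of $\mathfrak{U}_m$ together with Lemma~\ref{lemma of the square connection} (applied to a chain from $a$ through $c$ up to $m$) is equivalent to $c$ having an upper cover labelled $m$. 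The ``if'' direction uses the inequality $\sum_{x \in M \setminus M_a} f_m(\down{x}) \geq f_m(w)$ coming from $\mathcal{E}(m)$, lifted from $a$ to $c$ by monotonicity of the sum in $c$; the ``only if'' direction picks $a' \in \mathfrak{L}_m$ with $c \leq a'$ and applies the strict $\mathfrak{L}_m$-inequality the same way. Combined with an inductive argument on chain length (to verify that the firing sequence defining $\phi(c)$ is legal at each step), this will show that $\phi$ is a well-defined bijection preserving the cover relation, hence a lattice isomorphism.

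The main obstacle I expect is the ``only if'' half of the firability claim: when $c$ dominates no $a \in \mathfrak{U}_m$, one must actually exhibit some $a' \in \mathfrak{L}_m$ with $c \leq a'$. This is a structural statement about $\mathfrak{U}_m$ and $\mathfrak{L}_m$ rather than a calculation -- the complement $X \setminus \bigcup_{a \in \mathfrak{U}_m}\set{x : a \leq x}$ is a (possibly empty) down-set of $L$, so every element of it lies below one of its maximal elements, which is by definition in $\mathfrak{L}_m$. The boundary case $\mathfrak{U}_m = \set{\textbf{0}}$ must be treated separately (where $\mathcal{E}(m)$ degenerates to $w \geq 1$ and $m$ is firable at $\mathcal{O}$ by design). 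Once these are in place, Lemma~\ref{lemma of relation} identifies the induced bijection between $M$ and the fired vertices of the constructed CFG with the map $\kappa$, closing the loop and completing the proof.
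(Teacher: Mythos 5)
Your proposal follows essentially the same route as the paper: the identical support graph $V(G)=M\cup\set{s}$ with $E(x,m)=f_m(\down{x})$, the same initial configuration, simplicity via out-degrees dominating in-degrees, and the correspondence between chains of $L$ and executions of the game driven by the $\mathfrak{U}_m$-inequalities (for firability) and the $\mathfrak{L}_m$-inequalities (for non-firability, using exactly your observation that anything not above $\mathfrak{U}_m$ lies below some element of $\mathfrak{L}_m$). The only difference is presentational --- the paper splits your single firability criterion into two separate claims (every execution yields a chain, every chain yields an execution) and pins down the sink-edge multiplicity as $E(m,s)=f_m(w)+\sum f_m(\down{x})$ together with a first-repeated-vertex argument where you say ``sufficiently large $deg^{+}(m)$'' --- so no gap to report.
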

\begin{proof}
$\Rightarrow$ has been proved by Lemma \ref{lemma of solution}. It remains to show that $\Leftarrow$ is also true. We are going to construct a graph $G$ and an initial configuration $c_0$ so that the game is simple and ${\rm CFG(G,c_0)}$ is isomorphic to $L$.

The set of vertices of $G$ is $M\cup \set{s}$, where $s$ is distinct from $M$ and will play a role as the sink of $G$. The edges of $G$ are constructed as follows. For each $m\in M$ let $f_m:U_m\to \mathbb{N}$ be a solution of $\mathcal{E}(m)$, where $U_m$ is the collection of all variables in $\mathcal{E}(m)$. Set $E(m,s)=f_m (w)+\underset{v \in M \text{ and }e_v\in U_m \backslash \set{w}}{\sum} f_m (e_v)$ and for each $v\in M$ satisfying $e_v \in U_m\backslash \set{w}$, and $E(v,m)=f_m(e_v)$.

The constructing graph $G$ has the following properties. The graph $G$ is connected and has no closed component since each vertex $v\neq s$ has at least one edge going from $v$ to $s$, and $s$ has no outgoing edge. Thus any CFG on $G$ reaches a fixed point. For each $m \in V(G)\backslash \set{s}$ we have $deg^{-}(m)=\underset{v \in M\text{ and } e_v\in U_m\backslash\set{w}}{\sum}f_m(e_v)<E(m,s)\leq deg^{+}(m)$ and $deg^{+}(m)=f_m(w)+\underset{v \in M\text{ and } e_v \in U_m\backslash\set{w}}{\sum}f_m(e_v)+\underset{m'\in M\backslash \set{m}}{\sum} f_{m'}(e_m)$. Note that in the formula of $deg^{+}(m)$ if $e_m \not \in U_{m'}\backslash \set{w}$ then we set $f_{m'}(e_m)=0$. The in-degree and the out-degree at each  vertex of $G$ depend on the non-negative integral solutions $f_m$ we choose, therefore they may be large. In fact the number of vertices of $G$ is small, that is $|M|+1$, whereas the number of edges of $G$ is often very large. However this is not a problem of presenting $G$ since a multi-graph is often represented by associating each pair $(v,v')$ of vertices of $G$ with a number that indicates the number of edges from $v$ to $v'$.

We construct $c_0:V(G)\to\mathbb{N}$ as follows
$$
 c_0(v)=\begin{cases}
deg^{+}(v)-f_v(w)&\text{ if } v \neq s \text{ and } deg^{-}(v)\neq 0\\
deg^{+}(v)&\text{ if }deg^{-}(v)=0\\
0& \text{ if } v=s
\end{cases}
$$

We claim that ${\rm CFG(G,c_0)}$ is simple. Indeed for the sake of contradiction we suppose that there exists at least one vertex in $G$ which is fired more than once in an execution, say $c_0 \overset{v_1}{\to}c_1\overset{v_2}{\to}c_2\cdots\overset{v_{k-1}}{\to}c_{k-1}\overset{v_k}{\to}c_k$, to reach the fixed point  of ${\rm CFG(G,c_0)}$. By the assumption, $v_1,v_2,\cdots,v_k$ are not pairwise distinct. Let $i$ be the largest index such that $v_1,v_2,\cdots,v_{i-1}$ are pairwise distinct. Vertex $v_i$, therefore, is in $\set{v_1,v_2,\cdots,v_{i-1}}$. We have $deg^{-}(v_i)\neq 0$ since $v_i$ is fired more than once during the execution. To obtain $c_{i-1}$ each vertex in $v_1,v_2,\cdots,v_{i-1}$ is fired exactly once, therefore $c_{i-1}(v_i)\leq c_0(v_i)+deg^{-}(v_i)-deg^{+}(v_i)$. Since $v_i$ is firable in $c_{i-1}$, it follows that $deg^{+}(v_i)\leq c_{i-1}(v_i)\leq c_0 (v_i)+deg^{-}(v_i)-deg^{+}(v_i)=-f_{v_i}(w)+deg^{-}(v_i)$. It contradicts the fact that $deg^{+}(v_i)>deg^{-}(v_i)$.

We claim that for every execution $c_0\overset{v_1}{\to}c_1\overset{v_2}{\to}c_2\overset{v_3}{\to}\cdots\overset{v_{k-1}}{\to}c_{k-1}\overset{v_k}{\to}c_k$ of ${\rm CFG(G,c_0)}$, there exists a chain $\textbf{0}=d_0\prec d_1\prec d_2\prec \cdots \prec d_{k-1} \prec d_k$ in $L$ such that $\mathfrak{m}(d_{i-1},d_{i})=v_i$ for every $1 \leq i\leq k$. Note that if the chain exists then it is defined uniquely. We prove the claim by induction on $k$. For $k=1$, $v_1$ is firable in $c_0$. It follows from the construction of $G$ and $c_0$ that only the vertices in $G$ having indegree 0 are firable in $c_0$, therefore $\mathfrak{U}_{v_1}=\set{\textbf{0}}$. It implies that there exists $d_1 \in X$ such that $d_0 \prec d_1$ and $M_{d_0}\backslash M_{d_1}=\set{v_1}$. The claim holds for $k=1$. For $k\geq 2$, let $\textbf{0}\prec d_1\prec d_2\prec \cdots \prec d_{k-1}$ be the chain in $L$ such that $M_{d_{i-1}}\backslash M_{d_i}=\set{v_i}$ for every $1 \leq i \leq k-1$. If $d_{k-1}$ is not less than or equal to any element in $\mathfrak{U}_{v_k}$ then there exists $a \in \mathfrak{L}_{v_k}$ such that $d_{k-1} \leq a$. It follows from the definition of $\mathcal{E}(v_k)$ that $\overset{k-1}{\underset{i=1 }{\sum}}f_{v_k}(\down{v_i})\leq \underset{x \in M\backslash M_a}{\sum}f_{v_k}(\down{x})<f_{v_k}(w)$. It implies that after $v_1,v_2,\cdots,v_{k-1}$ have been fired, $v_k$ receives less than $f_{v_k}(w)$ chips from its neighbors, therefore $v_k$ is not firable in $c_{k-1}$. It's a contradiction. If there exists $b \in \mathfrak{U}_{v_k}$ such that $b\leq d_{k-1}$ then there exists $b' \in X$ such that $b \prec b'$ and $M_b\backslash M_{b'}=\set{v_k}$. Let $d_k=b'\lor d_{k-1}$. It suffices to show that $M_{d_{k-1}}\backslash M_{d_k}=\set{v_k}$. Indeed, we have $M_{d_k}=M_{b'}\cap M_{d_{k-1}}=(M_b \backslash \set{v_k})\cap M_{d_{k-1}}=(M_b\cap M_{d_{k-1}})\backslash \set{v_k}=M_{d_{k-1}}\backslash \set{v_k}$. Since $M\backslash M_{d_{k-1}}=\set{v_1,v_2,\cdots,v_{k-1}}$, $v_k \in M_{d_{k-1}}$, therefore $M_{d_{k-1}}\backslash M_{d_k}=\set{v_k}$.

Our next claim is that for any chain $\textbf{0}=d_0\prec d_1\prec d_2 \prec \cdots \prec d_{k-1}\prec d_k$ in $L$, there exists an execution $c_0 \overset{v_1}{\to}c_1\overset{v_2}{\to}c_2\overset{v_3}{\to}\cdots \overset{v_{k-1}}{\to}c_{k-1}\overset{v_k}{\to}c_k$ in ${\rm CFG(G,c_0)}$, where $v_i=\mathfrak{m}(d_{i-1},d_i)$ for every $1 \leq i \leq k$. We prove the claim by induction on $k$. For $k=1$, we have $\mathfrak{U}_{v_1}=\set{\textbf{0}}$. It follows easily that $deg^{-}(v_1)=0$, therefore $v_1$ is firable in $c_0$. By firing $v_1$ in $c_0$, we obtain $c_0 \overset{v_1}{\to}c_1$. The claim holds for $k=1$. For $k\geq 2$, let $c_0\overset{v_1}{\to}c_1\overset{v_2}{\to}c_2\overset{v_3}{\to}\cdots \overset{v_{k-1}}{\to}c_{k-1}$ be an execution in the game such that $\set{v_i}=M_{d_{i-1}}\backslash M_{d_i}$ for every $1 \leq i\leq k-1$.
Since $d_{k-1}$ is in the set $\set{x \in X: \exists y\in X,x\prec y \text{ and } \mathfrak{m}(x,y)=v_k}$ and $\mathfrak{U}_{v_k}$ is the collection of all minimal elements in this set,  there is $a\in \mathfrak{U}_{v_k}$ such that $a \leq d_{k-1}$. Thus $M\backslash M_a\subseteq \set{v_1,v_2,\cdots,v_{k-1}}$. The vertex $v_k$ receives at least $\underset{x\in M\backslash M_a}{\sum}f_{v_k}(\down{x})$ chips from its neighbors after all vertices $v_1,v_2,\cdots,v_{k-1}$ have been fired. The vertex $v_k$ is firable in $c_{k-1}$ since $\underset{x \in M\backslash M_a}{\sum}f_{v_k}(\down{x})\geq f_{v_k}(w)$. The claim follows.

It follows immediately from the above claims that ${\rm CFG(G,c_0)}$ and $L$ are isomorphic. This completes the proof.
\end{proof}
We establish a relation between $\mathfrak{U}_m$ and the join-irreducibles of $L$. The main result of \cite{MVP01} will follow easily from this relation.
\begin{prop}
\label{relation with join-irreducibles}
For each meet-irreducible $m$ of $L$, $\mathfrak{U}_m=\set{j^{-}: j \in J \text{ and } j \downarrow m}$
\end{prop}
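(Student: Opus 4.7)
The plan is to prove the set equality by two separate inclusions, relying throughout on two tools: the ULD characterisation of cover relations from the lemma of \cite{C98} — $a \prec b$ iff $M_b \subsetneq M_a$ with $|M_a \setminus M_b| = 1$, so that $\mathfrak{m}(a,b)$ is the unique element of the difference — and the general finite-lattice identity $b \leq a \iff M_a \subseteq M_b$ (which follows from $a = \bigwedge M_a$). Together these reduce questions about covers and the label map $\mathfrak{m}$ to set-theoretic manipulations of the sets $M_x$.

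For the inclusion $\{j^{-} : j \in J,\ j \downarrow m\} \subseteq \mathfrak{U}_m$, fix $j \downarrow m$. Since $j \not\leq m$ and $j^{-} \leq m$ (by minimality of $j$, since $j^{-} < j$), the set $M_{j^{-}} \setminus M_j$ — known to have size $1$ from $j^{-} \prec j$ — equals $\{m\}$, so $\mathfrak{m}(j^{-}, j) = m$ and $j^{-}$ lies in the defining set of $\mathfrak{U}_m$. The substantive step is minimality: given any $z \leq j^{-}$ with an upper cover $w$ satisfying $\mathfrak{m}(z,w) = m$, I compute $M_w = M_z \setminus \{m\} \supseteq M_{j^{-}} \setminus \{m\} = M_j$, which forces $w \leq j$. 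A case split then finishes: either $w = j$, in which case the join-irreducibility of $j$ makes $z \prec j$ collapse to $z = j^{-}$; or $w < j$, in which case $w \leq j^{-}$ (uniqueness of $j^{-}$), and then $M_{j^{-}} \subseteq M_w \not\ni m$ contradicts $j^{-} \leq m$.

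For the reverse inclusion, let $x \in \mathfrak{U}_m$ with associated upper cover $y$ (so $\mathfrak{m}(x,y) = m$, meaning $x \leq m$ and $y \not\leq m$), and pick $j$ as a minimal element of $\{u \in X : u \leq y,\ u \not\leq x\}$. A standard argument forces $j$ to be join-irreducible: otherwise its lower covers all lie in $\{u : u \leq y\}$ and by minimality of $j$ each is $\leq x$, so their join (which is $j$) would satisfy $j \leq x$. Hence $j^{-} \leq x$ as well. Since $x \prec y$ and $x < x \lor j \leq y$, we get $y = x \lor j$, so if $j \leq m$ then $y \leq m$, contradicting $\mathfrak{m}(x,y) = m$; thus $j \not\leq m$. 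Combined with $j^{-} \leq x \leq m$ and the fact that every $z < j$ satisfies $z \leq j^{-}$, this yields $j \downarrow m$. Finally, applying the first inclusion to this $j$ puts $j^{-}$ inside $\mathfrak{U}_m$, and the minimality of $x$ together with $j^{-} \leq x$ forces $j^{-} = x$.

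The principal difficulty I anticipate is the minimality half of the first inclusion: keeping the containments $M_j \subseteq M_w \subseteq M_z$ straight and exploiting the dichotomy $w = j$ versus $w < j$ cleanly. Once that piece is written out carefully, the reverse inclusion is largely bookkeeping, with the only genuinely new ingredient being the extraction of the join-irreducible $j$ as a minimal element below $y$ but not below $x$.
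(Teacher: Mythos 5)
Your proof is correct. The forward inclusion is essentially the paper's: membership of $j^{-}$ in $\set{x \in X: \exists y \in X,\ x \prec y \text{ and } \mathfrak{m}(x,y)=m}$ via $M_{j^{-}}\backslash M_j=\set{m}$, and minimality by pushing a hypothetical smaller labelled cover $(z,w)$ up to $j$; your dichotomy $w=j$ versus $w<j$ reaches the same contradiction the paper gets by running a chain from $b$ up to $j$ and observing that its last step cannot be labelled $m$. Where you genuinely diverge is the reverse inclusion. The paper takes $a\in\mathfrak{U}_m$ with its cover $b$ labelled $m$ and proves directly that $b$ is join-irreducible: given a second lower cover $c$ of $b$, it sets $d=a\wedge c$, chooses $d\prec a'\leq c$, uses minimality of $a$ to force $\mathfrak{m}(d,a')\neq m$ and hence $m\in M_{a'}$, deduces $a'\leq a$, and contradicts $d=a\wedge c$. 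You instead extract a minimal element $j$ of $\set{u: u\leq y,\ u\not\leq x}$, verify it is join-irreducible with $j^{-}\leq x$ and $j\downarrow m$ (via $y=x\vee j$), and then bootstrap: the already-proven forward inclusion places $j^{-}$ in $\mathfrak{U}_m$, so $j^{-}$ admits a cover labelled $m$, and the minimality of $x$ among such elements forces $x=j^{-}$. Both arguments are sound; the paper's is self-contained and identifies the cover $y$ itself as the relevant join-irreducible, while yours trades the meet-based contradiction for the standard join-irreducible extraction and a shorter finish that reuses the first half (a posteriori your $j$ equals $y$, since $x=j^{-}\prec j\leq y$ and $x\prec y$). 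The one step you should write out rather than call standard is that a non-join-irreducible $j\neq\textbf{0}$ is the join of its lower covers; this holds because any two distinct lower covers of $j$ join to $j$.
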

\begin{proof}
For each $m\in M$, let $\mathcal{F}_m$ be given by $\mathcal{F}_m=\set{x \in X: \exists y \in X, x \prec y \text{ and }\meetirrdiff{x}{y}=m}$.  Let $A$ denote $\set{j^{-}: j \in J \text{ and } j\downarrow m}$. First, we show that $A \subseteq \mathfrak{U}_m$. To this end, we prove that $j^{-}\in \mathfrak{U}_m$ for every $j \in J$ satisfying $j \downarrow m$.  Since $j \not \leq m$ and $j^{-} \leq m$, we have $ m \not \in M_{j}$ and $m \in M_{j^{-}}$, therefore $m \in M_{j^{-}}\backslash M_{j}$. Since $|M_{j^{-}}\backslash M_{j}|=1$, it follows that $M_{j^{-}}\backslash M_{j}=\set{m}$, hence $j^{-} \in \mathcal{F}_m$. It remains to prove that $j^{-}$  is a minimal element of $\mathcal{F}_m$. For a contradiction, we suppose that there exists $a \prec b$ in $L$ such that $\meetirrdiff{a}{b}=m$ and $a<j^{-}$. It follows easily that $b < j$, therefore there is a chain $b=d_1\prec d_2\prec \cdots \prec d_k=j$ in $L$ of length $\geq 1$. We have $\meetirrdiff{d_{k-1}}{j}\neq m$ since $\meetirrdiff{a}{b}=m$, therefore $d_{k-1}\neq j^{-}$. It contradicts the fact that $j$ is a join-irreducible. 

We are left with showing that $\mathfrak{U}_m \subseteq A$. Let $a \in \mathfrak{U}_m$. There is a unique element $b$ in $L$ such that $a \prec b$ and $\meetirrdiff{a}{b}=m$. It suffices to show that $b \in J$. For a contradiction, we suppose that $b \not \in J$. Then there exists $c \in X$ such that $c \prec b$ and $c \neq a$. Let $d$ denote the infimum of $a$ and $c$. There exists $a'\in L$ such that $d \prec a'$ and $a' \leq c$. Since $a \in \mathfrak{U}_m$, we have $\meetirrdiff{d}{a'}\neq m$, therefore $m \in M_{a'}$. It follows from $a' \leq b$ that $M_a=M_b \cup \set{m} \subseteq M_{a'} \cup \set{m}=M_{a'}$, hence $a' \leq a$. It contradicts the fact that $d$ is the infimum of $a$ and $c$.
\end{proof}
\begin{coro}
\label{uniqueness of join-irreducibles}
If $L$ is a distributive lattice then for every meet-irreducible $m$ of $L$, we have $|\mathfrak{U}_m|=1$. 
\end{coro}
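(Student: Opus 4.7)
The plan is to reduce the claim, via Proposition \ref{relation with join-irreducibles}, to showing that in a distributive lattice exactly one join-irreducible $j$ satisfies $j \downarrow m$. Existence is free: we already know $\mathfrak{U}_m \neq \emptyset$ (the cover $m \prec m^+$ witnesses it since $\mathfrak{m}(m,m^+)=m$), and the proposition rewrites this as the existence of at least one such $j$. Since the map $j \mapsto j^-$ is injective on $J$ (each join-irreducible has a unique lower cover), the whole corollary reduces to the uniqueness of $j$ with $j \downarrow m$.

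To prove uniqueness, suppose $j_1, j_2 \in J$ both satisfy $j_i \downarrow m$. The key preparatory step is to show that $j_i \leq m^+$ for $i=1,2$, where $m^+$ denotes the unique upper cover of $m$. This uses Lemma \ref{lemma of the square connection}: by the definition of $j_i \downarrow m$ we have $j_i^- \leq m$, so we may pick a saturated chain $j_i^- = x_1 \prec x_2 \prec \cdots \prec x_k = m$. The lemma, applied to $j_i^- \prec j_i$ (whose $\mathfrak{m}$-label is $m$), produces a parallel chain $j_i = y_1 \prec y_2 \prec \cdots \prec y_k = m^+$ with $x_\ell \prec y_\ell$ for every $\ell$. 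Hence $j_i \leq m^+$, and since $j_i \not\leq m$ we actually obtain $m \vee j_i = m^+$.

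Now I would apply the distributive law. From $m \vee j_2 = m^+$ and $j_1 \leq m^+$,
\[
j_1 \;=\; j_1 \wedge m^+ \;=\; j_1 \wedge (m \vee j_2) \;=\; (j_1 \wedge m) \vee (j_1 \wedge j_2).
\]
Because $j_1$ is join-irreducible and $j_1 \wedge m < j_1$ (as $j_1 \not\leq m$), the only way for the right-hand side to equal $j_1$ is $j_1 \wedge j_2 = j_1$, i.e.\ $j_1 \leq j_2$. By symmetry $j_2 \leq j_1$, so $j_1 = j_2$. Combined with Proposition \ref{relation with join-irreducibles} this gives $|\mathfrak{U}_m|=1$.

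The main obstacle is the step $j_i \leq m^+$; without it the distributive identity cannot be triggered, since we would not know that $j_1$ and $m \vee j_2$ live in a common interval with $j_1$. Lemma \ref{lemma of the square connection} is tailored exactly to this situation, propagating a cover relation labeled by $m$ upward along any chain terminating at $m$. Once that step is in place, the remainder is a routine combination of distributivity and the definition of join-irreducibility.
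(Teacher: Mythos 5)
Your proof is correct, but it takes a genuinely different route from the paper's. The paper argues by counting: it sets $m_{\downarrow}=\{j\in J: j\downarrow m\}$, observes that these sets are nonempty and pairwise disjoint for distinct $m\in M$ (because $j\in m_{\downarrow}$ forces $\mathfrak{m}(j^{-},j)=m$), and then invokes the classical fact that $|M|=|J|$ in a distributive lattice to conclude that every $m_{\downarrow}$ is a singleton, whence $|\mathfrak{U}_m|=1$ by Proposition \ref{relation with join-irreducibles}. You instead prove the uniqueness of $j$ with $j\downarrow m$ locally: Lemma \ref{lemma of the square connection} applied to the cover $j^{-}\prec j$ (labeled by $m$) lifts a chain from $j^{-}$ to $m$ and yields $j\leq m^{+}$, hence $m\vee j=m^{+}$, and then one application of the distributive identity together with join-irreducibility forces any two such $j$'s to coincide. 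Your argument is self-contained and avoids importing the equicardinality $|M|=|J|$ (a consequence of Birkhoff's theorem), at the cost of being longer; the paper's counting argument is shorter but global and non-constructive about which $j$ corresponds to which $m$. One small caveat: your parenthetical claim that $j\mapsto j^{-}$ is injective on $J$ is false in general (in the Boolean lattice on two atoms both atoms have lower cover $\mathbf{0}$), and the justification you give only shows the map is well-defined. Fortunately the reduction does not need injectivity: once there is exactly one $j$ with $j\downarrow m$, the set $\mathfrak{U}_m=\{j^{-}: j\downarrow m\}$ is the image of a singleton and hence a singleton, so the slip is harmless.
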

\begin{proof}
For each $m \in M$, we define $m_{\downarrow}=\set{j \in J: j \downarrow m}$. Note that $m_{\downarrow}\neq \emptyset$. For every $m_1,m_2\in M$, $m_1\neq m_2$ implies that ${m_1}_{\downarrow}\cap {m_2}_{\downarrow} =\emptyset$  since if $j \in m_{\downarrow}$ then $ \meetirrdiff{j^{-}}{j}=m$. In a distributive lattice, the cardinality of the meet-irreducibles is equal to the cardinality of the join-irreducibles, \emph{i.e} $|M|=|J|$. It follows easily that for every $m\in M$, $|m_{{\downarrow}}|=1$, therefore $|\mathfrak{U}_m|=1$ by Proposition \ref{relation with join-irreducibles}.
\end{proof}
\begin{prop}
\label{lemma of the condition on distributive lattice}
If $L$ is a distributive lattice then for each $m \in M$, $\mathcal{E}(m)$ has non-negative integral solutions.
\end{prop}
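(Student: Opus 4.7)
My plan is to exploit Corollary~\ref{uniqueness of join-irreducibles}, which gives $|\mathfrak{U}_m|=1$, so that $\mathcal{E}(m)$ has only a single ``lower-bound'' constraint and is loose enough to admit a transparent explicit solution. Write $\mathfrak{U}_m=\{u\}$. If $u=\textbf{0}$ then by definition $\mathcal{E}(m)=\{w\geq 1\}$ and $w=1$ trivially works, so from now on I assume $u\neq\textbf{0}$.

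The candidate I would propose is $w := |M\setminus M_u|$, together with $\down{x} := 1$ whenever $x\in M\setminus M_u$ and $\down{x} := 0$ whenever $x\in M_u$ (taken over the variables $\down{x}$ that actually appear in $\mathcal{E}(m)$). The one outside ingredient I would invoke is the elementary identity $x\leq y \iff M_y\subseteq M_x$, valid in any finite lattice; it follows from $x=\bigwedge M_x$, which in turn is obtained by taking a maximal $z$ with $x\leq z$ and $y\not\leq z$ and checking that such a $z$ is automatically meet-irreducible.

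To verify the candidate works, I would first note that $u\neq\textbf{0}$ together with $\textbf{0}=\bigwedge M$ forces $M\setminus M_u\neq\emptyset$, so $w$ is a positive integer, and the constraint for $u\in\mathfrak{U}_m$ reads $w\leq\sum_{x\in M\setminus M_u}\down{x}=|M\setminus M_u|=w$, which holds with equality. Next, for each $a\in\mathfrak{L}_m$ the defining property of $\mathfrak{L}_m$ gives $u\not\leq a$, whence the identity above yields $M_a\setminus M_u\neq\emptyset$. Therefore $\sum_{x\in M\setminus M_a}\down{x}=|(M\setminus M_a)\cap(M\setminus M_u)|=|M\setminus M_u|-|M_a\setminus M_u|\leq w-1<w$, which is the required strict inequality.

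I do not expect a real obstacle: once the singleton conclusion of Corollary~\ref{uniqueness of join-irreducibles} is in hand, the system is so under-determined that a uniform $\{0,1\}$-assignment already works, and the only slightly non-trivial step is the contrapositive $u\not\leq a \Rightarrow M_a\setminus M_u\neq\emptyset$, which is just the standard meet-irreducible characterisation of the order in a finite lattice.
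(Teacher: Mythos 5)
Your proposal is correct and follows essentially the same route as the paper: both invoke Corollary~\ref{uniqueness of join-irreducibles} to reduce to a single element $u\in\mathfrak{U}_m$, assign $\down{x}=1$ on $M\backslash M_u$, $0$ elsewhere, and $w=|M\backslash M_u|$, and then use $u\not\leq a$ to get the strict inequality for each $a\in\mathfrak{L}_m$. The only difference is cosmetic: you spell out the counting step via $|M\backslash M_u|-|M_a\backslash M_u|$ where the paper argues by the proper inclusion $(M\backslash M_a)\cap(M\backslash M_u)\subsetneq M\backslash M_u$.
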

\begin{proof}
It follows from Corollary \ref{uniqueness of join-irreducibles} that $|\mathfrak{U}_m|=1$. If $\mathfrak{U}_m=\set{\textbf{0}}$ then $\mathcal{E}(m)=\set{w\geq 1}$, therefore the proposition holds. If $\mathfrak{U}_m\neq \set{\textbf{0}}$,  let $u$ denote the unique element in $\mathfrak{U}_m$. The system $\mathcal{E}(m)$ of linear inequalities now becomes
$$
\mathcal{E}(m)=\set{w-\underset{x \in M\backslash M_a}{\sum} \down{x}\geq 1 : a \in \mathfrak{L}_m}\bigcup \set{ w \leq \underset{x \in M\backslash M_u}{\sum}\down{x} }
$$
The collection $U_m$ of all variables in $\mathcal{E}(m)$ is $\set{w}\cup \set{\down{x}: x \in (M\backslash M_u)\cup \underset{a \in \mathfrak{L}_m}{\bigcup} (M\backslash M_a)}$. Let $f:U_m\to \mathbb{N}$ be given by
$$
f(y)=\begin{cases}
1&\text{ if }y=\down{x} \text{ for some } x \in M\backslash M_u \\
|M\backslash M_u|&\text{ if } y=w\\
0& otherwise
\end{cases}
$$

We claim that $f$ is a solution of $\mathcal{E}(m)$. By the definition of $f$, it is straightforward to verify that $f(w)\leq \underset{x\in M\backslash M_u}{\sum} f(\down{x})$. It remains to show that for every $a\in \mathfrak{L}_m$, $\underset{x \in M\backslash M_a}{\sum}f(\down{x})<f(w)$. Indeed, $\underset{x \in M\backslash M_a}{\sum}f(\down{x})=|(M\backslash M_a)\cap (M\backslash M_u)|$ follows from the definition of $f$. Since $u \not \leq a$, we have $(M\backslash M_a)\cap (M\backslash M_u)\subsetneq M \backslash M_u$, therefore $\underset{x \in M\backslash M_a}{\sum}f(\down{x})<|M\backslash M_u|=f(w)$.  
\end{proof}
We derive easily the following corollary from Theorem \ref{the condition of lattices induced by CFG} and Proposition \ref{lemma of the condition on distributive lattice}
\begin{coro}[\cite{MVP01}]
Every distributive lattice is in ${\rm L(CFG)}$.
\end{coro}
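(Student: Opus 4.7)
The plan is to chain the two immediately preceding results. Theorem~\ref{the condition of lattices induced by CFG} gives a complete characterization: a ULD lattice $L$ lies in $L(CFG)$ precisely when, for every meet-irreducible $m \in M$, the system $\mathcal{E}(m)$ admits a non-negative integer solution. Proposition~\ref{lemma of the condition on distributive lattice} establishes exactly that hypothesis in the distributive case. So the entire proof of the corollary amounts to a single observation: take $L$ distributive, apply Proposition~\ref{lemma of the condition on distributive lattice} to produce solutions of $\mathcal{E}(m)$ for each $m \in M$, then invoke the ``$\Leftarrow$'' direction of Theorem~\ref{the condition of lattices induced by CFG} to conclude $L \in L(CFG)$.

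A small sanity check is worth stating in passing: any distributive lattice is upper locally distributive, so the ambient framework of Theorem~\ref{the condition of lattices induced by CFG} applies without restriction. All the real work (constructing an explicit solution $f$ to $\mathcal{E}(m)$ from the uniqueness $|\mathfrak{U}_m| = 1$ provided by Corollary~\ref{uniqueness of join-irreducibles}, and building the graph $G$ and initial configuration $\mathcal{O}$ from the solutions) has already been carried out in the preceding proofs. There is no further obstacle: the corollary is a one-line consequence of the two named results.
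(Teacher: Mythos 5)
Your proposal is correct and matches the paper exactly: the paper derives this corollary in one line from Theorem~\ref{the condition of lattices induced by CFG} and Proposition~\ref{lemma of the condition on distributive lattice}, precisely as you do. The aside about distributive lattices being ULD is a reasonable sanity check but adds nothing beyond the paper's intended argument.
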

We close this section by presenting a polynomial time algorithm for determining whether a given ULD lattice is in ${\rm L(CFG)}$. To do this we have to show that finding a non-negative integral solution of $\mathcal{E}(m)$ can be done in polynomial time. It is well-known that the problem of deciding whether an integral system of linear inequalities has an integral solution is NP-complete. Fortunately the following shows that the problem is solvable in polynomial time when it is restricted to $\mathcal{E}(m)$.
\begin{lem}
\label{find a solution}
Given $\mathcal{E}(m)$, we can decide if it has a non-negative integral solution, and if so, find one, in polynomial time.
\end{lem}
\begin{proof}
Clearly the corresponding problem on $\mathbb{R}$ is solvable in polynomial time by using the known algorithms for linear programming. If $\mathcal{E}(m)$ has no non-negative real solution then $\mathcal{E}(m)$ has no non-negative integral solution. Otherwise let $f'$ be a non-negative real solution of $\mathcal{E}(m)$. We are going to construct a non-negative integral solution $f$ of $\mathcal{E}(m)$ from $f'$. The map $f:\set{w}\cup \set{
e_x: x \in \underset{a \in \mathfrak{L}_m\cup \mathfrak{U}_m}{\bigcup} (M\backslash M_a)
}\to \mathbb{N}$ is defined by $f(e_x)=\lfloor 2|M|f'(e_x)\rfloor$ for every $x \in \underset{a \in \mathfrak{L}_m\cup \mathfrak{U}_m}{\bigcup} (M\backslash M_a)$, and $f(w)=min\set{
\underset{x \in M\backslash M_a}{\sum} f(e_x): a\in \mathfrak{U}_m
}$. We show that $f$ is a non-negative integral solution of $\mathcal{E}(m)$. By the definition of $f(w)$ it remains to show that $\underset{x \in M\backslash M_a}{\sum} f(e_x)<f(w)$ for any $a \in \mathfrak{L}_m$. Let $b \in \mathfrak{U}_m$ such that $f(w)=\underset{x \in M\backslash M_b}{\sum} f(e_x)$. Clearly $\underset{x \in M\backslash M_a}{\sum} f(e_x)\leq 2|M|\underset{x \in M\backslash M_a}{\sum}f'(e_x)\leq 2|M| (f'(w)-1)\leq -2|M|+\underset{x \in M\backslash M_b}{\sum} (2|M|f'(e_x))\leq -2|M|+\underset{x \in M\backslash M_b}{\sum} (\lfloor 2|M|f'(e_x)\rfloor+1)\leq -2 |M|+|M|+f(w)<f(w)$.
\end{proof}

The lattice $L$ can be input as a directed acyclic graph with the edges induced from the cover relation of $L$, \emph{i.e. } $(x,y) \in E(L)$ iff $x \prec y$ holds in $L$. Note that $\mathfrak{U}_m$ and $\mathfrak{L}_m$ can be found in $O(|E(L)|)$ time by using search algorithms.
The algorithm is presented by the following pseudocode\\
\begin{algorithm}[H]
\SetKwInOut{Input}{Input}
\SetKwInOut{Output}{Output}

\Input{A ULD lattice $L$ which is input as a acyclic graph with the edges defined by the cover relation}
\Output{\textbf{Yes} if $L$ is in ${\rm L(CFG)}$, \textbf{No} otherwise. If \textbf{Yes} then give a support graph $G$ and an initial configuration $c_0$ on $G$ so that ${\rm CFG(G,c_0)}$ is isomorphic to $L$}
$V(G):=M\cup \set{s}$\;
$E(G):=\emptyset$\;
\For{$m\in M$}{
Construct $\mathcal{E}(m)$ \;
\lIf{$\mathcal{E}(m)$\text{ has no non-negative integral solutions }}{\textbf{Reject}}\;
\Else{
        Let $f_m$ be a non-negative integral solution of $\mathcal{E}(m)$\;
        Let $U_m$ be the collection of all variables in $\mathcal{E}(m)$\;
        \For{$\down{x}\in U_m\backslash \set{w}$}{
                      Add $f_m (\down{x})$ edges $(x,m)$ to $G$
                     }
        Add $f_m (w)+\underset{\down{x}\in U_m \backslash\set{w}}{\sum}f_m (\down{x})$ edges $(m,s)$ to $G$
        }
}
Construct the initial configuration $c_0$ by
$$
c_0(v):=\begin{cases}
deg^{+}(v)&\text{ if }deg^{-}(v)=0\\
deg^{+}(v)-f_v(w)&\text{ if }deg^{-}(v)\neq 0 \text{ and } v\neq s\\
0&\text{ if }v=s
\end{cases}
$$
\end{algorithm}
We can use the Karmarkar's algorithm \cite{K84} to find a non-negative integral solutions $f_m$ of $\mathcal{E}(m)$ that can be done as in the proof Lemma \ref{find a solution}. For each $m \in M$ the number of bits that are input to the algorithm is bounded by $O(|M|\times |X|)$. We have to run the Karmarkar's algorithm $|M|$ times. Hence the algorithm can be implemented to run in $O(|M|^{6.5}\times {|X|^2}\times log |X|\times log(log |X|))$ time.
\section{A necessary and sufficient condition for ${\rm L(ASM)}$}
\label{fourth section}
Abelian Sandpile model is the CFG model which is defined on connected undirected graphs \cite{BTW87}. In this model, the support graph is undirected and it has a distinguished vertex which is called \emph{sink} and never fires in the game even if it has enough chips. If we replace each undirected edge $(v_1,v_2)$ in the support graph by two directed edges $(v_1,v_2)$ and $(v_2,v_1)$ and remove all out-edges of the sink then we obtain an CFG on directed graph which has the same behavior as the old one. For example, a CFG defined on the following undirected graph with sink $s$
\begin{center}
\includegraphics[bb=0 1 222 219,width=1.47in,height=1.44in,keepaspectratio]{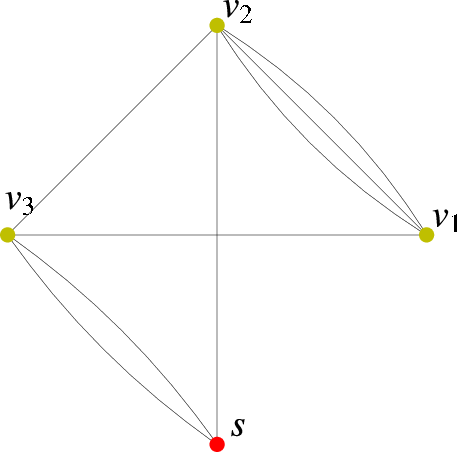}
\end{center}
is the same as one which is defined on the following graph
\begin{center}
\includegraphics[bb=1 3 242 254,width=1.38in,height=1.44in,keepaspectratio]{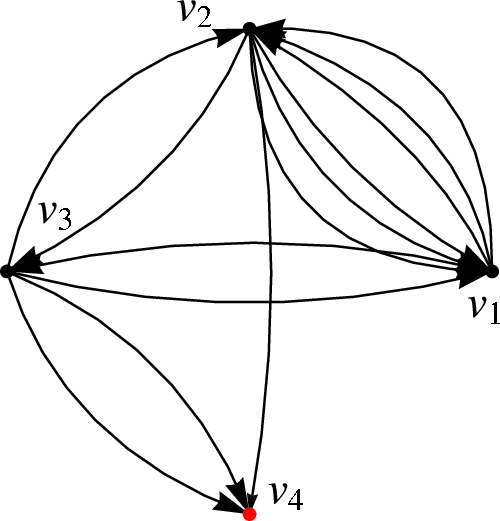}
\end{center}
and the initial configuration is the same as the old one. Thus a ${\rm ASM}$ can be regarded as a CFG on a directed multi-graph. We give an alternative definition of ${\rm ASM}$ on directed multi-graphs as follows. A ${\rm CFG(G,c_0)}$, where $G$ is a directed multi-graph, is a ${\rm ASM}$ if $G$ is connected, $G$ has only one sink $s$ and for any two distinct vertices $v_1,v_2$ of $G$, which are distinct from the sink, we have $E(v_1,v_2)=E(v_2,v_1)$. Therefore in this model we will continue to work on directed multi-graphs. 

The lattice structure of this model was studied in \cite{Mag03}. The authors proved that the class of lattices induced by ASMs is strictly included in ${\rm L(CFG)}$ and strictly includes the class of distributive lattices. To get the necessary and sufficient condition for ${\rm L(CFG)}$, we used the important result from \cite{MVP01} which asserts that every CFG is equivalent to a simple CFG.  A difficulty of getting a necessary and sufficient criterion for ${\rm L(ASM)}$ is that we do not know whether a similar assertion holds for the ${\rm ASM}$, {\emph i.e.} whether an ASM is equivalent to a simple ASM, therefore the argument in \cite{MVP01} does not seem to be transferable to ASM. Nevertheless, we overcome this difficulty by constructing a generalized correspondence between the firing vertices in a relation with their times of firing of a CFG  and the meet-irreducibles of the lattice generated by this CFG. Using this correspondence we achieve a necessary and sufficient condition for ${\rm L(ASM)}$. This condition provides a polynomial-time algorithm for determining whether a given ULD lattice is in ${\rm L(ASM)}$. We also give some other results which concern to this model. The following lemma shows that correspondence, it is a generalization of Lemma \ref{lemma of relation}.
\begin{lem}
\label{generalization of lemma of relation}
If $L$ is generated by ${\rm CFG(G,c_0)}$ then the map $\kappa: M \to V(G)\times \mathbb{N}$, determined by $\kappa(m)=(\vartheta(c,c'),\shotvector{c'}{\vartheta(c,c')})$, where $c,c'$ are two configurations of ${\rm CFG(G,c_0)}$ such that $c\prec c'$ and $\meetirrdiff{c}{c'}=m$, is well-defined. Furthermore $\kappa$ is injective.  
\end{lem}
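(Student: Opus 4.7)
The plan is to mimic the proof of Lemma \ref{lemma of relation} while additionally tracking the number of times a vertex has been fired. For well-definedness, fix any cover $c \prec c'$ with $\meetirrdiff{c}{c'} = m$; the goal is to show that the pair $(\vartheta(c,c'), \shotvector{c'}{\vartheta(c,c')})$ depends only on $m$, by comparing it with the pair computed at the canonical cover $m \prec \successor{m}$. Choose any chain $c = x_1 \prec x_2 \prec \cdots \prec x_k = m$ and apply Lemma \ref{lemma of the square connection} to produce a parallel chain $c' = y_1 \prec y_2 \prec \cdots \prec y_k = \successor{m}$ with $x_i \prec y_i$ and $\meetirrdiff{x_i}{y_i} = m$ for every $i$. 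I would then prove, by downward induction on $i$, that $\vartheta(x_i, y_i) = \vartheta(m, \successor{m})$ and that $\shotvector{y_i}{\vartheta(x_i, y_i)} = \shotvector{\successor{m}}{\vartheta(m, \successor{m})}$.

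In the inductive step, analyze the square $x_i, y_i, x_{i+1}, y_{i+1}$ and denote the vertices fired along its four covers by $v_i = \vartheta(x_i, y_i)$, $u_i = \vartheta(y_i, y_{i+1})$, $w_i = \vartheta(x_i, x_{i+1})$, and $v_{i+1} = \vartheta(x_{i+1}, y_{i+1})$. The uniqueness-of-shot-vector result of \cite{LP01}, applied to the two executions from $x_i$ to $y_{i+1}$, yields the multiset identity $\{v_i, u_i\} = \{w_i, v_{i+1}\}$. Hence either $v_i = v_{i+1}$ and $u_i = w_i$, or $v_i = w_i$ and $u_i = v_{i+1}$. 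In the first case, the inductive hypothesis for $v_{i+1}$ transfers directly to $v_i$, and the shot count is preserved because $u_i$ cannot equal $v_{i+1}$ (otherwise $w_i = u_i = v_{i+1} = v_i$, forcing $y_i = x_{i+1}$), so firing $u_i$ from $y_i$ leaves the count of $v_{i+1}$ unchanged.

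The main obstacle is ruling out the second case, where $v_i = w_i$. This would mean firing the same vertex at $x_i$ leads to both $y_i$ and $x_{i+1}$, so $y_i = x_{i+1}$. But this contradicts $\meetirrdiff{x_i}{y_i} = m$: since $x_{i+1} \leq m$ we have $m \in M_{x_{i+1}}$, whereas $\meetirrdiff{x_i}{x_{i+1}}$ lies in $M_{x_i} \setminus M_{x_{i+1}}$ and therefore differs from $m$. Hence the second case is impossible, the induction closes, and specializing to $i = 1$ gives well-definedness.

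For injectivity, I would take any maximal chain $\textbf{0} = c_0 \prec c_1 \prec \cdots \prec c_N = \textbf{1}$, viewed as an execution of the game. The ULD property forces the meet-irreducibles $\meetirrdiff{c_i}{c_{i+1}}$ to enumerate $M$ without repetition, so $N = |M|$. Along this chain the pairs $(\vartheta(c_i, c_{i+1}), \shotvector{c_{i+1}}{\vartheta(c_i, c_{i+1})})$ are pairwise distinct, since any repeated occurrence of the same fired vertex at two steps $i < j$ produces a strictly larger shot count at step $j$. Combined with well-definedness, this exhibits $|M|$ distinct values in the image of $\kappa$, proving injectivity.
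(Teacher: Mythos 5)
Your proof is correct and follows essentially the same route as the paper's: well-definedness via the parallel chain from Lemma \ref{lemma of the square connection}, transferring the fired vertex and its shot count square by square down to $(m,m^{+})$ (you merely make explicit, via the multiset identity and the exclusion of $v_i=w_i$, what the paper dismisses as easy to see), and injectivity via a counting argument resting on the fact that the height of a ULD lattice equals $|M|$. The only cosmetic difference is in the injectivity step, where the paper exhibits $\kappa$ as a surjection onto $\bigcup_{v}\bigl(\{v\}\times[sh_{\mathbf{1}}(v)]\bigr)$ and compares cardinalities, while you directly verify that the $|M|$ pairs arising along a single maximal chain are pairwise distinct.
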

Note that games in Lemma \ref{lemma of relation} are supposed to be simple, whereas games in the above lemma are not necessarily simple. The lemma means that if each $c\prec c'$ is labeled by the pair of the vertex at which $c$ is fired to obtain $c'$ and the number of times this vertex is fired to reach $c'$ from the initial configuration then the labeling is the same as  labeling $c \prec c'$ by $\mathfrak{m}(c,c')$. Let us give a concrete example to illustrate this concept. The CFG defined by the support graph $G$ and the initial configuration $c_0$, which are shown in Figure \ref{fig:image4546}, generates the lattice that is shown by Figure \ref{fig:image22} and Figure \ref{fig:image23}.
\begin{figure}[!h]
\centering
\subfloat[Support graph]{\label{fig:image45}\includegraphics[bb=2 2 258 204,width=1.55in,height=1.22in,keepaspectratio]{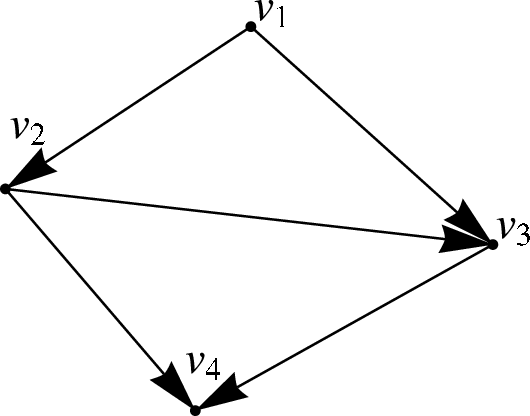}}
\qquad \qquad
\subfloat[Initial configuration]{\label{fig:image46}\includegraphics[bb=10 14 182 150,width=1.55in,height=1.22in,keepaspectratio]{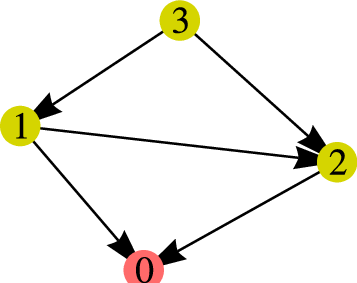}}
\caption{A non-simple CFG}
\label{fig:image4546}
\end{figure}
\captionsetup[subfloat]{justification=centerfirst,singlelinecheck=false}
\begin{figure}[!h]
\centering
\subfloat[Cover relation labeled by firing vertex and times of firing]{\label{fig:image22}\includegraphics[bb=12 6 159 394,width=0.962in,height=2.54in,keepaspectratio]{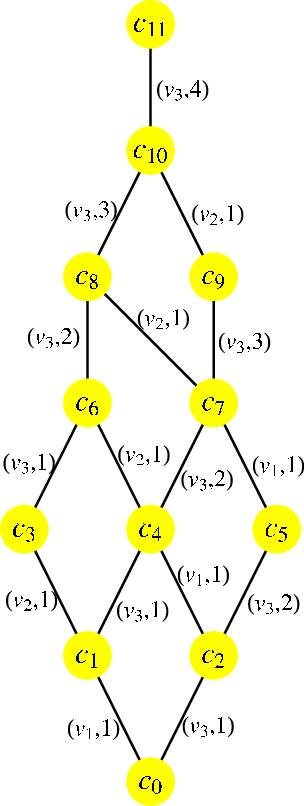}}
\qquad \qquad
\subfloat[Cover relation labeled by meet-irreducibles]{\label{fig:image23}\includegraphics[bb=13 4 174 433,width=0.951in,height=2.54in,keepaspectratio]{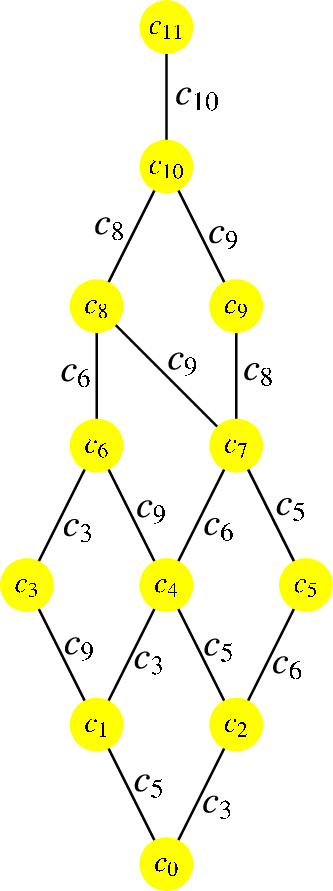}}
\caption{Two ways of labeling}
\end{figure}
\captionsetup[subfloat]{justification=justified,singlelinecheck=true}In Figure \ref{fig:image22}, each $c\prec c'$ is labeled by the fired vertex and the number of times this vertex is fired to obtain $c'$. Figure \ref{fig:image23} shows the lattice in the way each $c\prec c'$ is labeled by $\meetirrdiff{c}{c'}$. It is obvious that the labelings are the same with respect to the correspondence $c_3\to (v_3,1),c_5 \to (v_1,1),c_6\to (v_3,2),c_8\to (v_3,3),c_9\to (v_2,1),c_{10}\to (v_3,4)$.\\

\noindent\emph{{Proof of Lemma \ref{generalization of lemma of relation}}}. To prove $\kappa$ is well-defined, it suffices to show that for $c,c'$ being two configurations of ${\rm CFG(G,c_0)}$ such that $\meetirrdiff{c}{c'}=m$, where $m\in M$, we have $(\vartheta(c,c'),\shotvector{c'}{\vartheta(c,c')})=(\vartheta(m,m^{+}),\shotvector{m^{+}}{\vartheta(m,$ \linebreak $m^{+})})$. Let $c=c_1\prec c_2\prec c_3\prec \cdots \prec c_k=m$ be an execution in $L$. By Lemma \ref{lemma of the square connection}, there exists a chain $c'=d_1\prec d_2\prec d_3\prec \cdots \prec d_k=m^{+}$ such that $c_i \prec d_i$ for every $1 \leq i \leq k$. It is easy to see that $\vartheta(c,c')=\vartheta(c_1,d_1)=\vartheta(c_2,d_2)=\vartheta(c_3,d_3)=\cdots=\vartheta(c_k,d_k)=\vartheta(m,m^{+})$. Let $v$ denote $\vartheta(c,c')$. It remains to prove that $\shotvector{c'}{v}=\shotvector{m^{+}}{v}$. For each $1\leq i\leq k-1$, we have $v=\vartheta(c_i,d_i)\neq \vartheta(c_i,c_{i+1})$, therefore $\shotvector{d_{i+1}}{v}=1+\shotvector{c_{i+1}}{v}=1+\shotvector{c_i}{v}=\shotvector{d_i}{v}$. It implies that $\shotvector{c'}{v}=\shotvector{d_1}{v}=\shotvector{d_2}{v}=\cdots=\shotvector{d_k}{v}=\shotvector{m^{+}}{v}$.

It follows immediately from the definition of $\kappa$ that $\kappa$ is a surjection from $M$ to $\underset{v \in V(G)}{\bigcup} \big(\set{v}\times [\shotvector{\textbf{1}}{v}]\big)$. Here $[n]$ denotes the set $\set{1,2,\cdots,n}$, by convention, $[n]=\emptyset$ if $n \leq 0$. For $v\in V(G)$, $|\set{v}\times [\shotvector{\textbf{1}}{v}]|$ is the number of times $v$ is fired in any execution from $\textbf{0}$ to $\textbf{1}$, therefore $|\underset{v \in V(G)}{\bigcup} \big(\set{v}\times [\shotvector{\textbf{1}}{v}]\big)|$ is the number of times the vertices are fired to reach $\textbf{1}$. Thus $|\underset{v \in V(G)}{\bigcup} \big(\set{v}\times [\shotvector{\textbf{1}}{v}]\big)|$ is the height of $L$. To prove $\kappa$ is injective it suffices to show that $|M|$ is also the height of $L$. Let $\textbf{0}=c_0\prec c_1\prec c_2\prec \cdots\prec c_{h-1}\prec c_h=\textbf{1}$  be any chain of $L$, where $h$ is the height of $L$. Clearly $M=M_{\textbf{0}}\backslash M_{\textbf{1}}=\underset{0 \leq i \leq h-1}{\bigcup} M_{c_i}\backslash M_{c_{i+1}}$, and $(M_{c_i}\backslash M_{c_{i+1}})\cap (M_{c_{i'}}\backslash M_{c_{i'+1}})=\emptyset$ for any two distinct indices $i,i'$ in $\set{0,1,2,\cdots,h-1}$. By Lemma \ref{condition on cover relation for ULD lattice} each $M_{c_i}\backslash M_{c_{i+1}}$ contains exactly one element, therefore $|M|=h$.\hfill $\square$

In the case of directed graphs the systems $\mathcal{E}(m)$ of linear inequalities are solved independently to know whether $L$ is in ${\rm L(CFG)}$ since there is no requirement for relation between $E(v_1,v_2)$ and $E(v_2,v_1)$ on support graph. In the case of undirected graphs the condition $E(v_1,v_2)=E(v_2,v_1)$ must be satisfied for any two vertices distinct from sink. Hence the systems of linear inequalities for ${\rm ASM}$ are constructed as follows.

For each $\mathcal{E}(m)$ we define the system of linear inequalities $\mathfrak{E}(m)$ by replacing each variable $\down{x}$ in $\mathcal{E}(m)$ by $\downn{x}{m}$ and $w$ by $\downnn{w}{m}$. We give an example for this transformation. Consider the lattice shown in Figure \ref{fig:image17}. We have $$\mathcal{E}(c_8)=\mathcal{E}(c_9)=\set{w\geq 1}$$
$$
\mathcal{E}(c_6)=\left\{
\begin{array}{l}
w \leq \down{c_8}\\
w \leq \down{c_7}+\down{c_9}\\
w-\down{c_9}\geq 1
\end{array}
\right. ;
\mathcal{E}(c_7)=\left\{
\begin{array}{l}
w \leq \down{c_9}\\
w\leq \down{c_6}+\down{c_8}\\
w-\down{c_8}\geq 1
\end{array}
\right. 
$$
then 
$$
\mathfrak{E}(c_8)=\set{\downnn{w}{c_8}\geq 1};
\mathfrak{E}(c_9)=\set{\downnn{w}{c_9}\geq 1}
$$
$$
\mathfrak{E}(c_6)=\left\{\begin{array}{l}
\downnn{w}{c_6}\leq \downn{c_8}{c_6}\\
\downnn{w}{c_6}\leq \downn{c_7}{c_6}+\downn{c_9}{c_6}\\
\downnn{w}{c_6}-\downn{c_9}{c_6}\geq 1
\end{array}\right. ;
\mathfrak{E}(c_7)=\left\{\begin{array}{l}
\downnn{w}{c_7}\leq \downn{c_9}{c_7}\\
\downnn{w}{c_7}\leq \downn{c_6}{c_7}+\downn{c_8}{c_7}\\
\downnn{w}{c_7}-\downn{c_8}{c_7}\geq 1
\end{array}\right.
$$

For each $m\in M$, $\mathfrak{E}(m)$ is a system of linear inequalities whose variables are a subset of $\set{\downn{m_1}{m_2}: m_1\in M, m_2 \in M\text{ and } m_1 \neq m_2}\cup \set{\downnn{w}{m}: m\in M}$. Let $U$ denote the set of all variables in $\underset{m \in M}{\bigcup} \mathfrak{E}(m)$. The system $\Omega$ of linear inequalities is given by $$\Omega=\left(\underset{m \in M}{\bigcup}\mathfrak{E}(m)\right)\cup \set{\downn{m_1}{m_2}=\downn{m_2}{m_1}: \downn{m_1}{m_2}\text{ and }\downn{m_2}{m_1} \text{ both are in }U }$$ 

If $L$ is generated by a simple CFG, say ${\rm CFG(G,c_0)}$, then it follows from the correspondence established in Lemma \ref{lemma of relation} and the construction in Theorem \ref{the condition of lattices induced by CFG} that for $m_1,m_2\in M$, $\downn{m_1}{m_2}$ can be regarded as the number of directed edges from $v_1$ to $v_2$ in $G$, where $v_1,v_2$ are the corresponding vertices of $m_1,m_2$, respectively. As the sufficient condition in Theorem \ref{the condition of lattices induced by CFG}, the following lemma shows a similar assertion for ${\rm L(ASM)}$.
\begin{lem}
\label{sufficient condition for L(ASM)}
If $\Omega$ has non-negative integral solutions then $L \in {\rm L(ASM)}$.
\end{lem}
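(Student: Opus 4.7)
My strategy is to imitate the construction in the proof of Theorem \ref{the condition of lattices induced by CFG}, using the additional symmetry equations in $\Omega$ to ensure the resulting support graph is an ASM. From a non-negative integer solution of $\Omega$ I would read off, for each $m\in M$, a non-negative integer solution $f_m$ of $\mathcal{E}(m)$ by setting $f_m(\down{x})$ to the value assigned to $\downn{x}{m}$ and $f_m(w)$ to the value of $\downnn{w}{m}$. Each $f_m$ solves $\mathcal{E}(m)$ because $\mathfrak{E}(m)$ is literally $\mathcal{E}(m)$ with variables renamed, and $\Omega$ only adds symmetry equations on top.

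I would then build $G$ on vertex set $M\cup\set{s}$ and an initial configuration $\mathcal{O}$ exactly as in the proof of Theorem \ref{the condition of lattices induced by CFG}, placing $f_m(\down{v})$ edges from $v$ to $m$ whenever $\down{v}$ is a variable of $\mathcal{E}(m)$ and adding enough edges from $m$ to the sink $s$ so that $\deg^{+}(m)-\mathcal{O}(m)=f_m(w)$. The symmetry equations in $\Omega$ guarantee $E(m_1,m_2)=E(m_2,m_1)$ whenever both $\downn{m_1}{m_2}$ and $\downn{m_2}{m_1}$ lie in $U$. For pairs where only one of them lies in $U$, I would extend the solution by declaring the missing variable equal to the present one, i.e., insert the matching edges into $G$ to restore symmetry. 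The resulting $G$ has $s$ as its unique sink and satisfies the ASM edge-symmetry condition on the non-sink vertices; connectedness can be arranged by augmenting with dummy edges to $s$ if necessary.

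Once the $f_m$ are fixed, the two chain-lifting claims proved inside Theorem \ref{the condition of lattices induced by CFG} will transfer: every execution $\mathcal{O}=c_0\overset{v_1}{\to}\cdots\overset{v_k}{\to}c_k$ of $CFG(G,\mathcal{O})$ matches a saturated chain $\textbf{0}=d_0\prec\cdots\prec d_k$ in $L$ with $\meetirrdiff{d_{i-1}}{d_i}=v_i$, and conversely each such chain lifts to an execution. These two matchings assemble into an isomorphism $CFG(G,\mathcal{O})\cong L$, so $L\in L(ASM)$.

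The hard part will be verifying that the symmetry-completing edges do not disturb the transferred argument in the execution-to-chain direction: these extra edges are not constrained by $\Omega$ and could in principle deliver chips that invalidate the firability contradiction of Theorem \ref{the condition of lattices induced by CFG}. The key observation is that $\downn{m_2}{m_1}\notin U$ forces $m_2\in M_a$ for every $a\in\mathfrak{U}_{m_1}\cup\mathfrak{L}_{m_1}$ by the definition of $\mathcal{E}(m_1)$; hence any execution producing $d_{k-1}\leq a$ with $a\in\mathfrak{L}_{m_1}$ has $m_2\in M_{d_{k-1}}$, so $m_2$ has not yet been fired and the extra $m_2\to m_1$ edges contribute no chips at the moment $m_1$'s firability is examined. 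The chain-to-execution direction is unaffected, since extra edges can only enable more firings, and firing is never forced.
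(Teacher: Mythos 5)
Your proposal follows essentially the same route as the paper's own (sketched) proof: build $G$ on $M\cup\set{s}$ from a solution of $\Omega$, symmetrize the edges using the equations $\downn{m_1}{m_2}=\downn{m_2}{m_1}$, and transfer the simplicity and chain-lifting arguments from Theorem \ref{the condition of lattices induced by CFG}. In fact you supply correctly the one verification the paper leaves to the reader, namely that when $\downn{m_2}{m_1}\notin U$ the symmetry-completing edges into $\kappa^{-1}$-image of $m_1$ deliver no chips at the moment firability is tested, because $m_2\in M_{d_{k-1}}$ means $m_2$ has not yet fired.
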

\begin{proof}
We construct the graph $G$ whose set of vertices is $M\cup \set{s}$ and the edges are defined as follows. Let $f:U\to \mathbb{N}$ be a non-negative integral solution of $\Omega$. For each two distinct elements $m_1,m_2\in M$, if $\downn{m_1}{m_2}\in U$ then there are $f(\downn{m_1}{m_2})$ edges connecting $m_1$ to $m_2$ in $G$ and $f(\downn{m_1}{m_2})$ edges connecting $m_2$ to $m_1$.  If $\downn{m_1}{m_2}\not \in U$ and $\downn{m_2}{m_1}\not \in U$ then there is no edge connecting $m_1$ with $m_2$ in $G$. It follows immediately from the definition of $\Omega$ that $G$ is well-defined. For each $m \in M$, there are $f(\downnn{w}{m})+\underset{m' \in M \text{ and } m' \neq m}{\sum} E(m',m)$ edges connecting $m$ to $s$. The initial configuration $c_0:V(G)\to \mathbb{N}$ for the game is defined by 
$$
c_0(v)=\begin{cases}
deg^{+}(v)& \text{ if } v \in M \text{ and } \mathfrak{U}_v= \set{\textbf{0}}\\  
deg^{+}(v)-f(\downnn{w}{v})& \text{ if } v \in M \text{ and } \mathfrak{U}_v\neq \set{\textbf{0}}\\
0&\text{ if } v=s
\end{cases}
$$
where $deg^{+}(v)$ denotes the out-degree of $v$ in $G$. It's clear that $s$ is the sink of the game. We can argue similarly as in the proof of Theorem \ref{the condition of lattices induced by CFG} that the game is simple and generates $L$. 
\end{proof}
\begin{examp}
\label{second example ?}
We consider the system of linear inequalities of Example \ref{first example}. Then $\Omega$ is the following system
$$
\Omega=\left\{\begin{array}{l}
\downnn{w}{c_8}\geq 1\\
\downnn{w}{c_9}\geq 1\\
\downnn{w}{c_6}\leq \downn{c_8}{c_6}\\
\downnn{w}{c_6}\leq \downn{c_7}{c_6}+\downn{c_9}{c_6}\\
\downnn{w}{c_6}-\downn{c_9}{c_6}\geq 1\\
\downnn{w}{c_7}\leq \downn{c_9}{c_7}\\
\downnn{w}{c_7}\leq \downn{c_6}{c_7}+\downn{c_8}{c_7}\\
\downnn{w}{c_7}-\downn{c_8}{c_7}\geq 1\\
\downn{c_6}{c_7}=\downn{c_7}{c_6}
\end{array}\right.
$$
The map $f:U\to \mathbb{N}$ defined by 
$$
f(x)=\begin{cases}
0& \text{ if } x=\downn{c_9}{c_6} \text{ or } x=\downn{c_8}{c_7}\\
1& otherwise
\end{cases}
$$
is a non-negative integral solution of $\Omega$. By the construction in the sketch of proof, $G$ and the initial configuration are presented by Figure \ref{fig:image2526}.
\begin{figure}[h]
\centering
\subfloat[Support graph]{\label{fig:image25}\includegraphics[bb=0 1 224 231,width=1.08in,height=1.11in,keepaspectratio]{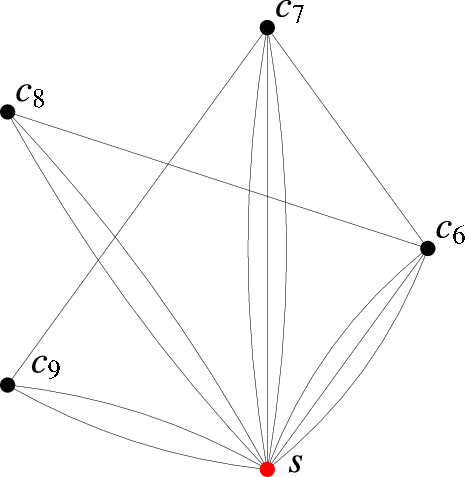}}
\qquad \qquad
\subfloat[Initial configuration]{\label{fig:image26}\includegraphics[bb=11 9 234 242,width=1.08in,height=1.11in,keepaspectratio]{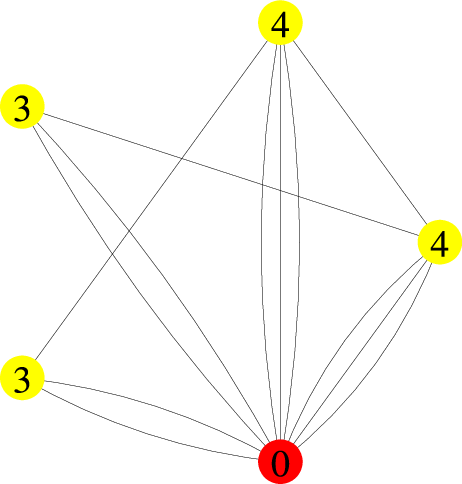}}
\caption{A ASM solution}
\label{fig:image2526}
\end{figure}
Note that in the figure, $G$ is presented by an undirected graph for a nice presentation. The sink of the game is in black. Doing a simple computation on the game, it is straightforward to verify that the lattice generated by ${\rm CFG(G,c_0)}$ is isomorphic to $L$. 

The following theorem shows that the condition that $\Omega$ has non-negative integral solutions is not only a sufficient condition but also a necessary condition of $L \in {\rm L(ASM)}$. 
\end{examp}
\begin{theo}
\label{necessary and sufficient condition for ASM}
$L\in {\rm L(ASM)}$ if and only if $\Omega$ has non-negative integral solutions.
\end{theo}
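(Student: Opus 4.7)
The $\Leftarrow$ direction is exactly Lemma \ref{sufficient condition for L(ASM)}, so the task is $\Rightarrow$. Assume $L$ is generated by an ASM $CFG(G,\mathcal{O})$ and write $\kappa(m) = (v_m, t_m)$ for the injection of Lemma \ref{generalization of lemma of relation}. The plan is to construct a non-negative integer solution of $\Omega$ directly from the game data, adapting the proof of Lemma \ref{lemma of solution} to the non-simple setting via this generalized correspondence.

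The first step is a structural sub-lemma controlling $\shotvector{a}{v_m}$ on $\mathfrak{U}_m \cup \mathfrak{L}_m$. For $a \in \mathfrak{U}_m$ the correspondence forces $\shotvector{a}{v_m} = t_m - 1$ with $v_m$ firable at $a$. For $a \in \mathfrak{L}_m$ one has $\shotvector{a}{v_m} \leq t_m - 1$: otherwise the $t_m$-th firing of $v_m$ occurs along some execution from $\mathbf{0}$ to $a$, producing a cover $a^\ast \prec a^{\ast\ast}$ on that execution with $\meetirrdiff{a^\ast}{a^{\ast\ast}} = m$ and $a^\ast \leq a$; descending to a minimal element of $\mathfrak{U}_m$ below $a^\ast$ then yields $b \in \mathfrak{U}_m$ with $b \leq a$, contradicting $a \in \mathfrak{L}_m$. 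With this control in hand, the $\mathfrak{U}_m$-type inequality of $\mathfrak{E}(m)$ should follow by expanding the chip-count identity $a(v_m) = \mathcal{O}(v_m) + \sum_{v' \neq v_m} E(v', v_m)\shotvector{a}{v'} - (t_m - 1)\deg^+(v_m) \geq \deg^+(v_m)$ and translating each $\shotvector{a}{v'}$ into a sum over the elements of $M \setminus M_a$ whose first $\kappa$-coordinate equals $v'$; the $\mathfrak{L}_m$-type strict inequality should come from the corresponding non-firability. The symmetry constraint $\downn{m_1}{m_2} = \downn{m_2}{m_1}$ is exactly what the ASM hypothesis $E(v, v') = E(v', v)$ supplies, which is where this argument diverges from the directed case.

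The main obstacle is that the naive values $\downn{m'}{m} := E(v_{m'}, v_m)$ and $\downnn{w}{m} := t_m \deg^+(v_m) - \mathcal{O}(v_m)$ break down when two meet-irreducibles share a vertex: the former vanishes even though the $\mathfrak{E}(m)$ inequalities may demand a strictly positive contribution to encode chip carry-over between successive firings of that vertex, and the latter can be non-positive. Resolving this requires a refined bookkeeping — an integer flow decomposition along a complete execution, splitting chip transfers among pairs of firing events in a pair-symmetric way — essentially constructing a simple ASM on $M \cup \set{s}$ that reproduces the dynamics of the original game, which is the reverse of the construction in Lemma \ref{sufficient condition for L(ASM)}. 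I expect this decomposition to be the technical core of the proof; once it is in place, verification of $\Omega$ reduces to direct arithmetic using the chip-count identities and the sub-lemma on $\shotvector{a}{v_m}$.
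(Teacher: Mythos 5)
Your reduction to the forward direction, your sub-lemma on shot-vectors ($\shotvector{a}{\kappa(m)^{(1)}}=\kappa(m)^{(2)}-1$ for $a\in\mathfrak{U}_m$ and $\leq \kappa(m)^{(2)}-1$ for $a\in\mathfrak{L}_m$), and your identification of the shared-vertex obstacle all match the paper's proof: its first claim is exactly your bound on $\mathfrak{L}_m$, and its second claim is your firability/non-firability comparison of the chips received from genuinely distinct vertices. But your proof stops exactly where you announce the ``technical core'' begins. The flow decomposition you invoke is never constructed, and it is not clear it could be: the chip carry-over between the $i$-th and $j$-th firings of the same vertex is asymmetric in time, while $\Omega$ demands $\downn{m_1}{m_2}=\downn{m_2}{m_1}$, so a decomposition ``splitting chip transfers among pairs of firing events in a pair-symmetric way'' is not something the dynamics hands you. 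As written, the argument is incomplete at its decisive point.

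The paper's resolution is much blunter and needs no decomposition. It sets $N=1+\sum_{v\neq s}\shotvector{\textbf{1}}{v}\, deg^{+}(v)$, a constant exceeding the total chip traffic of the whole game, and defines $f(\downn{m_1}{m_2})=E(\kappa(m_1)^{(1)},\kappa(m_2)^{(1)})$ when the underlying vertices differ but $f(\downn{m_1}{m_2})=N$ when $\kappa(m_1)^{(1)}=\kappa(m_2)^{(1)}$; symmetry is then automatic. Crucially, $f(\downnn{w}{m})$ is not your chip threshold $t_m\, deg^{+}(v_m)-\mathcal{O}(v_m)$ but simply $\min\{\sum_{x\in M\backslash M_a}f(\downn{x}{m}) : a\in\mathfrak{U}_m\}$, which makes the $\mathfrak{U}_m$-inequalities hold by fiat and is trivially non-negative. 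All that remains is the strict $\mathfrak{L}_m$-inequality, and there the case split on $Q=|\{(v,n)\in\kappa(M\backslash M_a): v=\kappa(m)^{(1)}\}|$ does the work: if $Q<\kappa(m)^{(2)}-1$ the missing copy of $N$ dominates every other term, and if $Q=\kappa(m)^{(2)}-1$ the comparison reduces to your firability argument. So your skeleton is correct, but the gap you flagged is filled by a large-constant trick rather than by reverse-engineering a simple ASM, and that trick is the one genuinely missing idea in your proposal.
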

\begin{proof}
The direction $\Leftarrow$ is already proved by Lemma \ref{sufficient condition for L(ASM)}. We are left with proving the direction $\Rightarrow$. Let ${\rm CFG(G,c_0)}$ be a ${\rm ASM}$ and generates $L$. Let $s$ denote the sink of the game. We define $$N=1+\underset{v \in V(G)\text{ and } v\neq s}{\sum}(\shotvector{\textbf{1}}{v}\times deg^{+}(v))$$
\noindent where $deg^{+}(v)$ denotes the out-degree of $v$ in $G$. Let $\kappa:M\to V(G)\times \mathbb{N}$ be the injective map which is defined as in Lemma \ref{generalization of lemma of relation}. For $m \in M$, let $\kappa(m)^{(1)},\kappa(m)^{(2)}$ denote the first and second components of $\kappa(m)$, respectively. 

We claim that for each $m \in M$ and each $a \in \mathfrak{L}_m$, we have $|\set{(v,n) \in \kappa(M\backslash M_a): v=\kappa(m)^{(1)}} |\leq \kappa(m)^{(2)}-1$. Indeed, let $\textbf{0}=c_0\prec c_1\prec\cdots \prec c_{k-1}\prec c_k=a$ be a chain of $L$, where $k$ is a non-negative integer. Note that $\kappa(M\backslash M_a)=\set{\kappa(\mathfrak{m}(c_i,c_{i+1})):0 \leq i \leq k-1}$. For a contradiction, we suppose that $|\set{(v,n)\in \kappa(M\backslash M_a): v=\kappa(m)^{(1)}}|\geq \kappa(m)^{(2)}$. It implies that the number of times $\kappa(m)^{(1)}$ is fired in the execution (chain) is greater than or equal to $\kappa(m)^{(2)}$. Hence there is a unique index $0\leq j \leq k-1$ such that $\vartheta(c_j,c_{j+1})=\kappa(m)^{(1)}$ and $|\set{i:i \leq j \text{ and }\vartheta(c_i,c_{i+1})=\kappa(m)^{(1)} }|=\kappa(m)^{(2)}$. It follows from the definition of $\kappa$ in Lemma \ref{generalization of lemma of relation} that $\kappa(\mathfrak{m}(c_j,c_{j+1}))=\kappa(m)$. Since $\kappa$ is injective, it follows that $\mathfrak{m}(c_j,c_{j+1})=m$. It contradicts the definition of $\mathfrak{L}_m$. The claim follows. 

Our next claim is that for each $m \in M$ and each $a \in \mathfrak{L}_m$, if $|\set{(v,n)\in \kappa(M\backslash M_a):v=\kappa(m)^{(1)}}|=\kappa(m)^{(2)}-1$ then for every $b \in \mathfrak{U}_m$, we have
\begin{equation}
\label{first equa}
\underset{\begin{array}{c} x\in M\backslash M_a\\ \kappa(x)^{(1)}\neq \kappa(m)^{(1)}  \end{array}}{\sum}E(\kappa(x)^{(1)},\kappa(m)^{(1)})< \underset{\begin{array}{c}x\in M\backslash M_b \\ \kappa(x)^{(1)}\neq \kappa(m)^{(1)}  \end{array}}{\sum}E(\kappa(x)^{(1)},\kappa(m)^{(1)})
\end{equation}
Indeed, the righ-hand side of \eqref{first equa} indicates the number of chips vertex $\kappa(m)^{(1)}$ receives from its neighbors during an execution from $\textbf{0}$ to $b$. To reach $b$, $\kappa(m)^{(1)}$ has been fired $\kappa(m)^{(2)}-1$ times. It follows that the number of chips stored at $\kappa(m)^{(1)}$ in $b$ is
\begin{align}
\label{second equa}
\notag c_0(\kappa(m)^{(1)})&-(\kappa(m)^{(2)}-1)\times \big(deg^{+}(\kappa(m)^{(1)})-E(\kappa(m)^{(1)},\kappa(m)^{(1)})\big)+\\
&+\underset{\begin{array}{c}x\in M\backslash M_b \\ \kappa(x)^{(1)}\neq \kappa(m)^{(1)}  \end{array}}{\sum}E(\kappa(x)^{(1)},\kappa(m)^{(1)})
\end{align}
$\kappa(m)^{(1)}$ is firable in $b$, therefore $\eqref{second equa}\geq deg^{+}(\kappa(m)^{(1)})$. By a similar argument, the number of chips stored at $\kappa(m)^{(1)}$ in $a$ is
\begin{align}
\label{third equa}
\notag c_0(\kappa(m)^{(1)})&-(\kappa(m)^{(2)}-1)\times \big(deg^{+}(\kappa(m)^{(1)})-E(\kappa(m)^{(1)},\kappa(m)^{(1)})\big)+\\
&+\underset{\begin{array}{c} x\in M\backslash M_a\\ \kappa(x)^{(1)}\neq \kappa(m)^{(1)}  \end{array}}{\sum}E(\kappa(x)^{(1)},\kappa(m)^{(1)})
\end{align}
$\kappa(m)^{(1)}$ is not firable in $a$, therefore $\eqref{third equa}<deg^{+}(\kappa(m)^{(1)})$. It follows easily from $\eqref{second equa}$ and $\eqref{third equa}$ that $\eqref{first equa}$ holds. 

Let $f:U\to \mathbb{N}$ be given by
$$
f(y)=\begin{cases}
E(\kappa(m_1)^{(1)},\kappa(m_2)^{(1)})&\text{ if } y=\downn{m_1}{m_2}\text{ for some }\\ 
&m_1,m_2\in M\text{ and }\kappa(m_1)^{(1)}\neq \kappa(m_2)^{(1)}\\
N&\text{ if }y=\downn{m_1}{m_2}\text{ for some}\\
&m_1,m_2\in M \text{ and } \kappa(m_1)^{(1)}=\kappa(m_2)^{(1)}\\
\min\set{ \underset{x \in M\backslash M_a}{\sum}f(\downn{x}{m}): a \in \mathfrak{U}_m }& \text{ if } y=\downnn{w}{m} \text{ for some } m \in M\text{ and }\\
&\text{ and }\mathfrak{U}_m \neq \set{\textbf{0}}\\
1&\text{ if } y=\downnn{w}{m} \text{ for some } m \in M\text{ and }\\
&\text{ and } \mathfrak{U}_m=\set{\textbf{0}}
\end{cases}
$$, where $U$ is the collection of all variables of $\Omega$. The proof is completed by showing that $f$ is a non-negative integral solution of $\Omega$. Since ${\rm CFG(G,c_0)}$ is a ASM, it follows easily that for any two distinct elements $m_1,m_2\in M$, if $\downn{m_1}{m_2}$ and $\downn{m_2}{m_1}$ both are in $U$ then $f(\downn{m_1}{m_2})=f(\downn{m_2}{m_1})$.  It remains to show that for each $m \in M$, $f$ satisfies $\mathfrak{E}(m)$. If $\mathfrak{U}_m=\set{\textbf{0}}$ then the assertion follows easily. If $\mathfrak{U}_m\neq \set{\textbf{0}}$, it is straightforward to verify that $f(\downnn{w}{m})\leq \underset{x\in M\backslash M_a}{\sum} f(\downn{x}{m})$ for any $a \in \mathfrak{U}_m$. We are left with proving $\underset{x \in M\backslash M_a}{\sum}f(\downn{x}{m})<f(\downnn{w}{m})$ for any $a\in \mathfrak{L}_m$. For this purpose, we show that $\underset{x \in M\backslash M_a}{\sum}f(\downn{x}{m})<\underset{x \in M\backslash M_b}{\sum}f(\downn{x}{m})$ for any $b\in \mathfrak{U}_m$. We have
\begin{align}
\notag &\underset{x\in M\backslash  M_a}{\sum}f(\downn{x}{m})=\\
\notag &=\underset{\begin{array}{c} x \in M\backslash  M_a \\  \kappa(x)^{(1)}\neq \kappa(m)^{(1)} \end{array}}{\sum}f(\downn{x}{m})+\underset{\begin{array}{c} x \in M\backslash M_a \\ \kappa(x)^{(1)}=\kappa(m)^{(1)}  \end{array}}{\sum}f(\downn{x}{m})=\\
\notag &=\underset{ \begin{array}{c} x \in M\backslash M_a \\ \kappa(x)^{(1)}\neq \kappa(m)^{(1)}  \end{array}  }{\sum}E(\kappa(x)^{(1)},\kappa(m)^{(1)})+Q\times N
\end{align}
where $Q=| \set{(v,n)\in \kappa(M\backslash M_a): v=\kappa(m)^{(1)}} |$. There are two possibilities
\begin{itemize}
  \item[a. ]$Q=\kappa(m)^{(2)}-1$. It follows from \eqref{first equa} and $|\set{(v,n)\in \kappa(M\backslash M_b):v=\kappa(m)^{(1)}}|=\kappa(m)^{(2)}-1$ that 
\begin{align*}
\underset{ \begin{array}{c} x \in M\backslash M_a \\ \kappa(x)^{(1)}\neq \kappa(m)^{(1)}  \end{array}  }{\sum}E(\kappa(x)^{(1)},\kappa(m)^{(1)})&+Q\times N< \underset{\begin{array}{c}x\in M\backslash M_b \\ \kappa(x)^{(1)}\neq \kappa(m)^{(1)}  \end{array}}{\sum}E(\kappa(x)^{(1)},\kappa(m)^{(1)})+\\
&+(\kappa(m)^{(2)}-1)\times N=\\
&=\underset{
\begin{array}{c}
x \in M\backslash M_b
\end{array}
}{\sum} f(\downn{x}{m})
\end{align*}
\item[b. ] $Q<\kappa(m)^{(2)}-1$. It follows from the definition of $N$ that 
\begin{align*}
\underset{ \begin{array}{c} x \in M\backslash M_a \\ \kappa(x)^{(1)}\neq \kappa(m)^{(1)}  \end{array}  }{\sum}&E(\kappa(x)^{(1)},\kappa(m)^{(1)})+Q\times N<N+Q\times N\leq (\kappa(m)^{(2)}-1)\times N\leq \\
&\leq  \underset{\begin{array}{c}x\in M\backslash M_b \\ \kappa(x)^{(1)}\neq \kappa(m)^{(1)}  \end{array}}{\sum}E(\kappa(x)^{(1)},\kappa(m)^{(1)})+(\kappa(m)^{(2)}-1)\times N=\\
&=\underset{
\begin{array}{c}
x \in M\backslash M_b
\end{array}}{\sum} f(\downn{x}{m})
\end{align*}
\end{itemize}  
Therefore, $f$ is a non-negative integral solution of $\Omega$.
\end{proof}
As Lemma \ref{find a solution}, the problem of finding a non-negative integral solution of $\Omega$ is solvable in polynomial time.
\begin{lem}
\label{find a solution for Omega}
Given $\Omega$, we can decide if it has a non-negative integral solution, and if so, find one, in polynomial time.
\end{lem}
\begin{proof}
Clearly the corresponding problem on $\mathbb{R}$ is solvable in polynomial time by using the known algorithms for linear programming. If $\Omega$ has no non-negative real solution then $\Omega$ has no non-negative integral solution. Otherwise let $f':U\to \mathbb{N}$ be a non-negative real solution of $\Omega$, where $U$ denotes the set of variables of $\Omega$. Let $f:U\to \mathbb{N}$ be given by $f(e_{m_1,m_2})=\lfloor 2 |M|f'(e_{m_1,m_2})\rfloor$ if $e_{m_1,m_2} \in U$, and $f(w_m)=min\set{\underset{x \in M\backslash M_a}{\sum}f(e_{x,m}): a \in \mathfrak{U}_m}$. Now we can use the same arguments as in Lemma \ref{find a solution} to argue that $f$ is a non-negative integral solution of $\Omega$. This completes the proof.
\end{proof}

Lemma \ref{find a solution for Omega} implies a polynomial time algorithm for the problem of determining whether a given lattice is in ${\rm L(ASM)}$, and construct a corresponding CFG if there exists one. We again use the Karmarkar's algorithm for finding a non-negative integral solution of $\Omega$. The number of variables of $\Omega$ is bounded by $O(|M|^2)$ and the number of bits, which are input to the algorithms for linear programming to find a non-negative integral solution of $\Omega$, is bounded by $O\left( {|M|}^3\times |X| \right)$. Therefore the algorithm can be implemented to run in $O(|M|^{13}\times |X|^2\times log |X|\times log(log|X|))$ time.\\
\begin{examp}
\label{old example}
Let $L$ be the following lattice
\begin{center}
\includegraphics[bb=11 6 366 209,width=3.1in,height=1.77in,keepaspectratio]{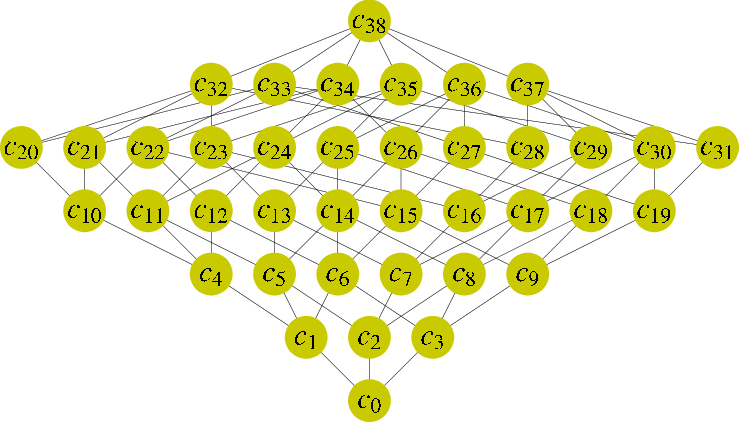}
\end{center}
This latttice was presented in \cite{Mag03} as an example of showing the class of lattices induced by ASM is strictly included in the class of lattices induced by CFG. We again present it here as an application of Theorem \ref{necessary and sufficient condition for ASM}.  The system $\Omega$ of linear inequalities is
$
\{
\downnn{w}{c_{32}}\geq 1,\downnn{w}{c_{33}}\geq 1,\downnn{w}{c_{37}}\geq 1,
\downnn{w}{c_{34}}-(\downn{c_{32}}{c_{34}}+\downn{c_{36}}{c_{34}}+\downn{c_{37}}{c_{34}})\geq 1,
\downnn{w}{c_{34}}\leq \downn{c_{33}}{c_{34}},
\downnn{w}{c_{34}}\leq \downn{c_{32}}{c_{34}}+\downn{c_{35}}{c_{34}},
\downnn{w}{c_{34}}\leq  \downn{c_{35}}{c_{34}}+\downn{c_{36}}{c_{34}}+\downn{c_{37}}{c_{34}},
\downnn{w}{c_{35}}-(\downn{c_{33}}{c_{35}}+\downn{c_{34}}{c_{35}}+\downn{c_{37}}{c_{35}})\geq 1,\downnn{w}{c_{35}}\leq \downn{c_{32}}{c_{35}},
\downnn{w}{c_{35}}\leq \downn{c_{33}}{c_{35}}+\downn{c_{34}}{c_{35}}+\downn{c_{36}}{c_{35}},
\downnn{w}{c_{35}}\leq \downn{c_{36}}{c_{35}}+\downn{c_{37}}{c_{35}},
\downnn{w}{c_{36}}-(\downn{c_{32}}{c_{36}}+\downn{c_{33}}{c_{36}}+\downn{c_{35}}{c_{36}})\geq 1,
\downnn{w}{c_{36}}\leq \downn{c_{33}}{c_{36}}+\downn{c_{34}}{c_{36}},
\downnn{w}{c_{36}}\leq \downn{c_{32}}{c_{36}}+\downn{c_{34}}{c_{36}}+\downn{c_{35}}{c_{36}},
\downnn{w}{c_{36}}\leq \downn{c_{37}}{c_{36}},
\downn{c_{34}}{c_{35}}=\downn{c_{35}}{c_{34}},
\downn{c_{34}}{c_{36}}=\downn{c_{36}}{c_{34}},
\downn{c_{35}}{c_{36}}=\downn{c_{36}}{c_{35}},
\downn{c_{34}}{c_{36}}=\downn{c_{36}}{c_{34}}
\}
$\\
Using the algorithms for linear programming we know that the system has no non-negative  solution, therefore has no non-negative integral solution. Therefore the lattice is not in ${\rm L(ASM)}$.
\end{examp}
\begin{examp}
\label{smaller example}
The game with the initial configuration presented in Figure \ref{fig:image30} generates the lattice presented in Figure \ref{fig:image31}. It is an example which is smaller than one presented in \cite{Mag03}. 
\begin{figure}[h]
\centering
\subfloat[Initial configuration]{\label{fig:image30}\includegraphics[bb=15 6 242 251,width=1.23in,height=1.33in,keepaspectratio]{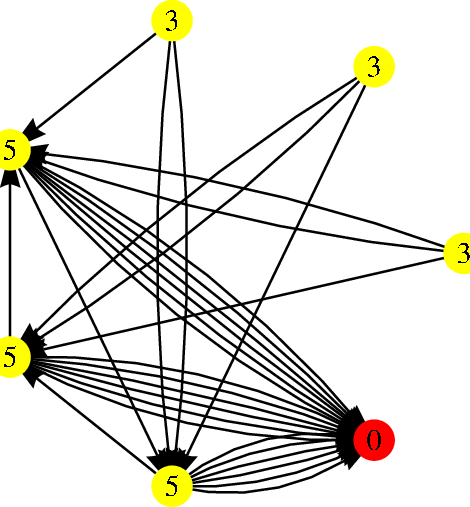}}
\qquad \qquad
\subfloat[Generating lattice]{\label{fig:image31}\includegraphics[bb=11 5 321 243,width=2.3in,height=1.77in,keepaspectratio]{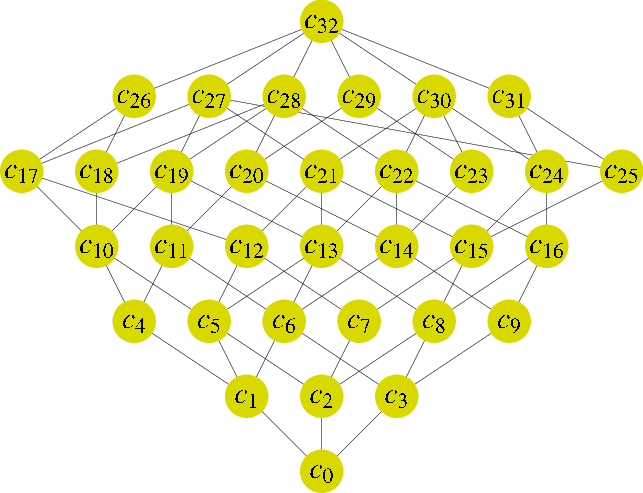}}
\caption{Smaller example}
\label{fig:image3031}
\end{figure}
\end{examp}
Note that the two lattices in Example \ref{old example} and Example \ref{smaller example} are generated only by simple CFGs. It is useful to give a sufficient condition for such lattices. The following proposition shows such a condition
\begin{prop} 
Let $H$ be the undirected simple graph whose vertices are $M$ and edges are defined by $(m_1,m_2) \in E(H)$ if there are $x_1,x_2,x_3 \in X$ such that $x_1 \prec x_2, x_1 \prec x_3$, $\mathfrak{m}(x_1,x_2)=m_1$ and $\mathfrak{m}(x_1,x_3)=m_2$. If $L \in {\rm L(CFG)}$ and $H$ is a complete graph then $L$ is generated only by simple CFGs.
\end{prop}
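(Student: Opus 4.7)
The plan is to show that under the hypothesis, every CFG generating $L$ must fire each vertex at most once. So let $CFG(G,\mathcal{O})$ be an arbitrary (not necessarily simple) CFG generating $L$, identify the elements of $L$ with its configurations, and let $\mathcal{V}\subseteq V(G)$ denote the set of vertices that are fired at least once during an execution from $\mathbf{0}$ to $\mathbf{1}$. I will use the injective map $\kappa:M\to V(G)\times \mathbb{N}$ from Lemma \ref{generalization of lemma of relation} and analyze its first coordinate $\pi:M\to V(G)$, $\pi(m)=\kappa(m)^{(1)}$.

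The key observation is that whenever $(m_1,m_2)\in E(H)$, one has $\pi(m_1)\neq \pi(m_2)$. Indeed, by definition of $E(H)$ there exist $x_1,x_2,x_3\in X$ with $x_1\prec x_2$, $x_1\prec x_3$, $\mathfrak{m}(x_1,x_2)=m_1$ and $\mathfrak{m}(x_1,x_3)=m_2$. If $\vartheta(x_1,x_2)=\vartheta(x_1,x_3)$, then firing that common vertex from the configuration $x_1$ would produce the same successor, forcing $x_2=x_3$ and hence $m_1=m_2$, a contradiction. Since $H$ is complete, this shows that $\pi$ is injective on $M$, so $|M|\leq |\mathcal{V}|$.

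On the other hand, $L$ is ULD, so in any maximal chain $\mathbf{0}=c_0\prec c_1\prec\cdots\prec c_N=\mathbf{1}$ we have $N=|M|$ (each cover contributes exactly one meet-irreducible to $M_{c_i}\setminus M_{c_{i+1}}$, as already used in the proof of Lemma \ref{lemma of relation}). But $N$ is also the total number of firings in the corresponding execution, i.e.\ $N=\sum_{v\in \mathcal{V}}\shotvector{\mathbf{1}}{v}$. Since each $\shotvector{\mathbf{1}}{v}\geq 1$ for $v\in \mathcal{V}$, this gives $|\mathcal{V}|\leq N=|M|$.

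Combining the two inequalities yields $|\mathcal{V}|=|M|=\sum_{v\in \mathcal{V}}\shotvector{\mathbf{1}}{v}$, which forces $\shotvector{\mathbf{1}}{v}=1$ for every $v\in \mathcal{V}$. Hence every vertex is fired at most once during any execution, so $CFG(G,\mathcal{O})$ is simple, as required. The only non-routine step is the inequality $|M|\leq |\mathcal{V}|$, and the real content there is the geometric remark that two distinct upper covers of a common element must be produced by firing two distinct vertices; once that is in place, the counting argument using the ULD height formula finishes the proof with no further machinery.
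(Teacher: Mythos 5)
Your proof is correct. The paper actually leaves this proposition without proof (``easy and left to the reader''), and your argument is exactly the natural one the authors intend: completeness of $H$ plus the well-definedness of the first coordinate of $\kappa$ from Lemma \ref{generalization of lemma of relation} forces the meet-irreducibles to map injectively to fired vertices, and comparing $|M|\leq|\mathcal{V}|$ with the height count $\sum_{v\in\mathcal{V}}\shotvector{\textbf{1}}{v}=|M|$ yields $\shotvector{\textbf{1}}{v}=1$ for every fired vertex.
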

\begin{proof}
We assume otherwise that $L$ is generated by a non-simple CFG, say ${\rm CFG(G,c_0)}$. Let $\kappa$ be the map defined in Lemma \ref{generalization of lemma of relation}. Since the game is not simple, there is a vertex $v$ of $G$  such that $v$ is fired more than once during an execution from $c_0$ to the fixed point. Therefore there are two distinct meet-irreducibles $m_1$ and $m_2$ in $M$ such that $\kappa(m_1)=(v,1)$ and $\kappa(m_2)=(v,2)$. Since $H$ is complete, there are three distinct configurations $x_1,x_2$ and $x_3$ such that $x_1\prec x_2,x_1\prec x_3,\mathfrak{m}(x_1,x_2)=m_1$ and $\mathfrak{m}(x_1,x_3)=m_2$. By the definition of $\kappa$ we have $x_1\overset{v}{\to}x_2$ and $x_1\overset{v}{\to}x_3$. It implies that $x_2=x_3$, a contradiction.
\end{proof}

In \cite{MVP01}, the authors proved that a general CFG  is always equivalent to a simple CFG. An arising question is that whether a ASM is equivalent to a simple ASM. The idea from the proof in \cite{MVP01} does not seem to be applicable to this model, whereas the answer follows easily from the proofs of Lemma \ref{sufficient condition for L(ASM)} and Theorem \ref{necessary and sufficient condition for ASM}
\begin{prop}
Any ASM is equivalent to a simple  ASM.
\end{prop}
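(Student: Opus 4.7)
The plan is to extract the statement essentially as a corollary of Theorem \ref{necessary and sufficient condition for ASM} and Lemma \ref{sufficient condition for L(ASM)}, by combining the forward direction of the theorem with the constructive backward direction of the lemma.

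First I would let $CFG(G,\mathcal{O})$ be an arbitrary ASM reaching its fixed point, and let $L$ denote the lattice it generates; by definition $L \in L(ASM)$. Applying the $\Rightarrow$ direction of Theorem \ref{necessary and sufficient condition for ASM} to $L$, the associated system $\Omega$ admits a non-negative integer solution $f:U\to \mathbb{N}$ (in fact the proof of that direction exhibits such an $f$ explicitly in terms of the original ASM, using the injection $\kappa$ from Lemma \ref{generalization of lemma of relation}).

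Next, I would feed this solution $f$ into the construction given in the sketch of proof of Lemma \ref{sufficient condition for L(ASM)}. That construction produces a support graph $G'$ on vertex set $M\cup\{s\}$ and an initial configuration $\mathcal{O}'$ such that (i) $G'$ is symmetric away from the sink $s$, so $CFG(G',\mathcal{O}')$ is genuinely a ASM, (ii) the game $CFG(G',\mathcal{O}')$ is simple, and (iii) $CFG(G',\mathcal{O}')$ is isomorphic to $L$. The verification of (i)–(iii) is already carried out in the proof of Lemma \ref{sufficient condition for L(ASM)} and I would not redo it here; it relies only on the defining inequalities of $\Omega$ and the arguments already used in Theorem \ref{the condition of lattices induced by CFG}.

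Since both $CFG(G,\mathcal{O})$ and the newly constructed simple ASM $CFG(G',\mathcal{O}')$ generate lattices isomorphic to $L$, they are equivalent by definition, which establishes the proposition. There is no real obstacle to overcome: the only thing worth emphasising is that the $\Rightarrow$ direction of Theorem \ref{necessary and sufficient condition for ASM} really is what lets us bypass the difficulty noted earlier in the section, namely that the simple-CFG reduction of \cite{MPV01} does not transfer directly to ASM; here the reduction is routed through the linear system $\Omega$ instead of through a direct graph-theoretic simplification.
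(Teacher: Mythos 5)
Your proposal is correct and is exactly the argument the paper intends: the paper itself states that the proposition ``follows easily from the proofs of Lemma \ref{sufficient condition for L(ASM)} and Theorem \ref{necessary and sufficient condition for ASM}'', i.e.\ pass from the given ASM to a solution of $\Omega$ via the forward direction of the theorem, then feed that solution into the lemma's construction to obtain a simple ASM generating the same lattice. No meaningful difference from the paper's route.
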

\begin{proof}
Assume that $L\in {\rm L(ASM)}$. By Theorem \ref{necessary and sufficient condition for ASM} $\Omega$ has non-negative integral solutions. We consider the CFG that is constructed as in the proof of Lemma \ref{sufficient condition for L(ASM)}. It is a simple ASM and generates $L$. This completes the proof.
\end{proof}
\section{CFGs on acyclic graphs}
\label{fifth section}
In \cite{Mag03} the author gave a strong relation between ASM and the simple CFGs on acyclic graphs (directed acyclic graphs). The author pointed out that a simple CFG on an acyclic graph is equivalent to a ASM. In this section we study CFGs on acyclic graphs that are not necessarily simple. We show that each CFG on an acyclic graph is equivalent to a simple CFG on an acyclic graph. As a corollary, every lattice generated by a CFG on an acyclic graph is in ${\rm L(ASM)}$. We also give a necessary and sufficient criterion for lattices generated by CFGs on acyclic graphs.

\begin{lem}
\label{CFGs on acyclic graphs}
If $L$ is generated by a CFG on an acyclic graph then $\mathcal{G}$ is acyclic, where $\mathcal{G}$ is the simple directed graph whose vertices are $M$ and edges are defined by $(m_1,m_2)\in E(\mathcal{G})$ if and only if $m_1 \in \underset{a \in \mathfrak{U}_{m_2}}{\bigcup}(M\backslash M_a)$.
\end{lem}
\begin{proof}
Let ${\rm CFG(G,c_0)}$ be a CFG which generates $L$, where $G$ is an directed acyclic graph. Let $\kappa:M\to V(G)\times \mathbb{N}$ be the map which is defined in Lemma \ref{generalization of lemma of relation}. For each $v \in V(G)$, $\downvertices{v}$ denotes the collection of vertices $v'$ of $G$ such that  there is a directed path from $v'$ to $v$ in $G$ ($v'$ could be equal to $v$). A sequence $(v_1,v_2,\cdots,v_k)\in V(G)^{k}$ is called a \emph{valid firing sequence} if there is an execution $c_0\overset{v_1}{\to}{c_1}\overset{v_2}{\to}{c_2}\overset{v_3}{\to}\cdots \overset{v_k}{\to}{c_k}$ in the game. Note that if such an execution exists then it is defined uniquely. 

We claim that for each $m \in M$ and each $a \in \mathfrak{U}_m$, we have $\set{\kappa(x)^{(1)}: x \in M\backslash M_a}\subseteq \downvertices{\kappa(m)^{(1)}}$. Indeed, let $\textbf{0}=c_0\overset{v_1}{\to}{c_1}\overset{v_2}{\to}c_2\overset{v_3}{\to}\cdots \overset{v_k}{\to}c_k=a\overset{\kappa(m)^{(1)}}{\to}j_m$ be an execution, where $j_m$ is the configuration which is obtained by firing $\kappa(m)^{(1)}$ in $a$. It is clear that $(v_1,v_2,\cdots,v_k,\kappa(m)^{(1)})$ is a valid firing sequence. Since firing of the vertices not in $\downvertices{\kappa(m)^{(1)}}$ does not affect the firability of the remaining vertices, by removing all vertices not in $\downvertices{\kappa(m)^{(1)}}$ of the sequence, we get the sequence $(v_{i_1},v_{i_2},\cdots,v_{i_{k'}},\kappa(m)^{(1)})$ which remains a valid firing sequence. There exists an execution $c_0=d_0\overset{v_{i_1}}{\to}d_1\overset{v_{i_2}}{\to}d_2\overset{v_{i_3}}{\to}\cdots \overset{v_{i_{k'}}}{\to}d_{k'}=a'\overset{\kappa(m)^{(1)}}{\to}j_m'$ in the game. Since the number of occurrences of $\kappa(m)^{(1)}$ in $(v_{i_1},v_{i_2},\cdots,v_{i_{k'}},\kappa(m)^{(1)})$ is the same as the one in $(v_1,v_2,\cdots,v_k,\kappa(m)^{(1)})$, it follows that $\kappa(\mathfrak{m}(a,j_m))=\kappa(\mathfrak{m}(a',j_m'))$, thus $\mathfrak{m}(a,j_m)=\mathfrak{m}(a',j_m')=m$. Clearly we have $\oneshotvector{a'}\leq \oneshotvector{a}$, therefore $a'\leq a$. It follows from the definition of  $\mathfrak{U}_m$ that $a'=a$, hence $\set{\kappa(x)^{(1)}: x\in M\backslash M_a}=\set{\kappa(\mathfrak{m}(c_i,c_{i+1}))^{(1)}:0 \leq i \leq k-1}=\set{v_i : 1 \leq i \leq k} \subseteq \downvertices{\kappa(m)^{(1)}}$.

Our next claim is that for each $m \in M$ and each $a \in \mathfrak{U}_m$, if $(\kappa(m)^{(1)},n)\in \kappa(M\backslash M_a)$ then $n < \kappa(m)^{(2)}$. Indeed, let $\textbf{0}=c_0\overset{v_1}{\to}c_1\overset{v_2}{\to}c_2\overset{v_3}{\to}\cdots \overset{v_k}{\to}c_k=a\overset{\kappa(m)^{(1)}}{\to} j_m$ be an execution. Since $\set{\mathfrak{m}(c_i,c_{i+1}):0\leq i \leq k-1}=M\backslash M_a$, there is $p\in \set{0,1,2,\cdots, k-1}$ such that $\kappa(\mathfrak{m}(c_p,c_{p+1}))=(\kappa(m)^{(1)},n)$. Thus $v_{p+1}=\kappa(m)^{(1)}$, and $n=\oneshotvector{c_{p+1}}(v_{p+1})\leq \oneshotvector{a}(v_{p+1})=\oneshotvector{j_m}(v_{p+1})-1<\oneshotvector{j_m}(v_{p+1})=\kappa(m)^{(2)}$. The claim follows.

Since $G$ is an acyclic graph, there exists a function $h:V(G)\to \mathbb{N}$ such that if $(v_1,v_2)\in E(G)$ then $h(v_1)<h(v_2)$. Let $N=\max \set{\shotvector{\textbf{1}}{v}: v\in V(G)}$. We define $h':M\to \mathbb{N}$ by $h'(m)=N\times h(\kappa(m)^{(1)})+\kappa(m)^{(2)}$. To prove $\mathcal{G}$ is acyclic, it suffices to show that for every $(m_1,m_2) \in \mathcal{G}$, we have $h'(m_1)<h'(m_2)$. From the definition of $\mathcal{G}$, there exists $a \in \mathfrak{U}_{m_2}$ such that $m_1 \in M\backslash M_a$. There are two possibilities
\begin{itemize}
  \item[a. ] $\kappa(m_1)^{(1)}=\kappa(m_2)^{(1)}$.  It follows from the second claim that $h'(m_1)=N \times h(\kappa(m_1)^{(1)})+\kappa(m_1)^{(2)}<N \times h(\kappa(m_2)^{(1)})+\kappa(m_2)^{(2)}=h'(m_2)$. 
  \item[b. ] $\kappa(m_1)^{(1)}\neq \kappa(m_2)^{(1)}$. It follows from the first claim that $\kappa(m_1)^{(1)}\in \downvertices{\kappa(m_2)^{(1)}}$.  Therefore $h'(m_1)=N \times h(\kappa(m_1)^{(1)})+\kappa(m_1)^{(2)}\leq N \times h(\kappa(m_2)^{(1)})+\kappa(m_2)^{(2)}-N+(\kappa(m_1)^{(2)}-\kappa(m_2)^{(2)})<N \times h(\kappa(m_2)^{(1)})+\kappa(m_2)^{(2)}=h'(m_2)$.
\end{itemize}  
\end{proof}
We recall a result in \cite{Mag03}
\begin{theo}
\label{theorem of Magnien}\cite{Mag03} 
Let $C$ be a simple CFG on an acyclic graph $G$. Then $C$ is equivalent to a ASM.
\end{theo}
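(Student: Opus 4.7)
The plan is to construct explicitly an ASM whose configuration space is isomorphic to that of the given simple CFG $C = CFG(G,\mathcal{O})$ by symmetrising $G$ and introducing a sink that absorbs the chips created by the newly added back-edges. Since $C$ is simple and $G$ is acyclic, every vertex fires at most once, so we may restrict attention to the set $\mathcal{F} \subseteq V(G)$ of vertices that actually fire during the execution reaching the fixed point; vertices outside $\mathcal{F}$ are inert and can be discarded without changing the configuration space.

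I would build the support graph $G'$ as follows. Take $V(G') = \mathcal{F} \cup \set{s}$, where $s$ is a new distinguished sink. For each directed edge $(u,v) \in E(G)$ with $u,v \in \mathcal{F}$, include in $G'$ both $(u,v)$ and $(v,u)$ with multiplicity $E(u,v)$; this makes $G'$ symmetric on non-sink vertices, as required for an ASM. For each $v \in \mathcal{F}$, attach exactly enough edges from $v$ to $s$ so that $deg^{+}_{G'}(v)$ equals the chip count of $v$ at the moment of its unique firing in $C$, namely $\mathcal{O}(v) + \underset{u \in \mathcal{F}}{\sum} E(u,v)$. Set $\mathcal{O}'(v) = \mathcal{O}(v)$ for $v \in \mathcal{F}$ and $\mathcal{O}'(s) = 0$.

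The isomorphism between configurations of $C$ and $CFG(G',\mathcal{O}')$ is given by matching the sets of fired vertices. Two facts need verification. First, every valid firing sequence of $C$ is also valid in $G'$: the forward edges of $G$ are present in $G'$, so every chip arriving at $v$ before its firing in $C$ also arrives in $G'$, and by construction $deg^{+}_{G'}(v)$ equals the threshold reached at that moment. Second, no vertex fires twice in $G'$: after $v$'s single firing, the subsequent chips arriving at $v$ come only from its neighbours in $\mathcal{F}$ (each firing at most once in $G'$ by induction on the topological order of $G$), and these contributions are already absorbed in the choice of $deg^{+}_{G'}(v)$, so $v$'s chip count can never again reach $deg^{+}_{G'}(v)$.

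The main obstacle is the careful bookkeeping of sink-edges so that $G'$ simulates $C$ exactly---too few and back-chips trigger spurious re-firings, too many and the original firings fail to be enabled. Simplicity of $C$ provides the leverage: each vertex has a single well-defined firing threshold, and the acyclic structure of $G$ lets the inductive argument over the topological order close the loop. Once these two claims are verified, the shot-vector characterisation of the order in a ULD lattice gives a lattice isomorphism between $CFG(G,\mathcal{O})$ and the ASM $CFG(G',\mathcal{O}')$, completing the proof.
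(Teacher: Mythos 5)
The paper does not prove this theorem; it quotes it from \cite{Mag03}. So your attempt has to stand on its own, and unfortunately the construction as specified fails. Your prescription sets $deg^{+}_{G'}(v)=\mathcal{O}(v)+\sum_{u\in\mathcal{F}}E_G(u,v)$, while the out-edges of $v$ inside $\mathcal{F}$ already contribute $\sum_{u\in\mathcal{F}}\big(E_G(v,u)+E_G(u,v)\big)$ to $deg^{+}_{G'}(v)$; the implied number of sink edges is therefore $E_{G'}(v,s)=\mathcal{O}(v)-\sum_{u\in\mathcal{F}}E_G(v,u)$, which can be negative, and when it is zero you create a closed component. Concretely, take $G$ to be the path $a\to b\to c$ with single edges and $\mathcal{O}(a)=1$, $\mathcal{O}(b)=\mathcal{O}(c)=0$: here $\mathcal{F}=\{a,b\}$, your rule gives $E_{G'}(a,s)=E_{G'}(b,s)=0$ and $deg^{+}_{G'}(a)=deg^{+}_{G'}(b)=1$, so in $G'$ the single chip bounces between $a$ and $b$ forever and the game never converges, let alone reproduces the three-element chain generated by $C$. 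Your second verification step is where this is hidden: the final chip count of $v$ in $G'$ works out to $\sum_{u\in\mathcal{F}}E_G(v,u)$, and nothing forces this to be strictly below your threshold $\mathcal{O}(v)+\sum_{u\in\mathcal{F}}E_G(u,v)$.

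The deeper issue is that after a set $S$ of vertices has fired, the chip count of $v$ in $G'$ exceeds its count in $G$ by $\sum_{u\in S}E_G(v,u)$, the back-flow from those out-neighbours of $v$ that happen to lie in $S$. This excess is configuration-dependent, so no single uniform threshold can reproduce the firability predicate of $C$ on all reachable configurations: your choice also \emph{delays} legitimate firings (take $a\to c$, $b\to c$, $c\to d$ with $\mathcal{O}(a)=\mathcal{O}(b)=1$, $\mathcal{O}(c)=0$; in $C$ the vertex $c$ fires after either one of $a,b$, but your threshold $deg^{+}_{G'}(c)=2$ requires both, which destroys the lattice isomorphism). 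A correct argument must neutralise this back-flow term --- for instance by exploiting acyclicity to control which out-neighbours of $v$ can have fired before $v$ does, or by choosing edge multiplicities and thresholds as a solution of a system of inequalities indexed by the configurations in $\mathfrak{U}_m$ and $\mathfrak{L}_m$, which is essentially how Section \ref{fourth section} of the paper proceeds. The skeleton (symmetrise the graph, add a sink, match shot-vectors) is the right shape, but the quantitative heart of the proof is missing.
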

Here, our main result of this section
\begin{theo}
\label{CFGs on acyclic graph including in ASM}
Any CFG on an acyclic graph is equivalent to a simple CFG on an acyclic graph, therefore equivalent to a ASM.
\end{theo}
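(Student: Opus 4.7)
The plan is to reduce the theorem to Magnien's (Theorem~\ref{theorem of Magnien}) by producing, from the given CFG on an acyclic graph, a simple CFG on an acyclic graph that generates the same lattice $L$. Let $L$ be generated by $CFG(G,\mathcal{O})$ with $G$ acyclic. Lemma~\ref{CFGs on acyclic graphs} tells us that the auxiliary graph $\mathcal{G}$ on $M$ is acyclic, and since $L \in L(CFG)$, Theorem~\ref{the condition of lattices induced by CFG} provides, for every $m \in M$, a non-negative integer solution $f_m$ of $\mathcal{E}(m)$.

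The key step is to refine each $f_m$ so that its support lies in $\set{w}\cup \set{\down{v}: v \in \bigcup_{a \in \mathfrak{U}_m}(M\backslash M_a)}$. Given $f_m$, define $\tilde f_m(w)=f_m(w)$, $\tilde f_m(\down{v})=f_m(\down{v})$ for $v \in \bigcup_{a \in \mathfrak{U}_m}(M\backslash M_a)$, and $\tilde f_m(\down{v})=0$ otherwise. For each $b \in \mathfrak{U}_m$ the inclusion $M\backslash M_b \subseteq \bigcup_{a \in \mathfrak{U}_m}(M\backslash M_a)$ shows that the inequality $w\leq \sum_{v \in M\backslash M_b}\down{v}$ is unaffected; for each $a \in \mathfrak{L}_m$, zeroing out additional variables can only decrease the left-hand side of $\sum_{v \in M\backslash M_a}\down{v}<w$, so it still holds. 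Hence $\tilde f_m$ is a valid solution of $\mathcal{E}(m)$.

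Feeding the family $(\tilde f_m)_{m \in M}$ into the construction in the proof of Theorem~\ref{the condition of lattices induced by CFG} produces a simple CFG $CFG(G',\mathcal{O}')$ generating $L$. In $G'$, every edge between two meet-irreducibles $v,m$ comes from a positive $\tilde f_m(\down{v})$, which by construction forces $v \in \bigcup_{a \in \mathfrak{U}_m}(M\backslash M_a)$, i.e.\ $(v,m) \in E(\mathcal{G})$; all other edges of $G'$ terminate at the sink $s$. Because $\mathcal{G}$ is acyclic and $s$ has no outgoing edges, $G'$ is acyclic. Theorem~\ref{theorem of Magnien} then yields an ASM equivalent to $CFG(G',\mathcal{O}')$, and hence to the original $CFG(G,\mathcal{O})$. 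The only delicate point is the restriction step of the second paragraph, and it is settled by the elementary monotonicity argument given there.
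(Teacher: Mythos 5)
Your proposal is correct and follows essentially the same route as the paper's own proof: restrict each solution $f_m$ of $\mathcal{E}(m)$ to the variables indexed by $\bigcup_{a\in\mathfrak{U}_m}(M\backslash M_a)$, check it remains a solution, feed it into the construction of Theorem \ref{the condition of lattices induced by CFG} to obtain a simple CFG on a graph contained in $\mathcal{G}$ plus the sink, invoke Lemma \ref{CFGs on acyclic graphs} for acyclicity, and conclude via Theorem \ref{theorem of Magnien}. The only difference is that you spell out the monotonicity argument for why the restricted solution still satisfies $\mathcal{E}(m)$, which the paper leaves implicit.
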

\begin{proof}
Let ${\rm CFG(G,c_0)}$ be a CFG such that $G$ is an acyclic graph, and let $L$ denote ${\rm CFG(G,c_0)}$. By Theorem \ref{the condition of lattices induced by CFG} that for each $m \in M$, $\mathcal{E}(m)$ has non-negative integral solutions. Let $U_m$ be the collection of all variables of $\mathcal{E}(m)$ and ${f'}_m:U_m\to \mathbb{N}$ be a non-negative integral solution of $\mathcal{E}(m)$. The function $f_m:U_m\to \mathbb{N}$ defined by
$$
f_m(y)=\begin{cases}
{f'}_m(y)&\text{ if } y=w\text{ or } y\in \set{\down{x}: x\in \underset{a \in \mathfrak{U}_m}{\bigcup} (M\backslash M_a)}\\
0&otherwise
\end{cases}
$$
is also a non-negative integral solution of $\mathcal{E}(m)$. By using solutions $f_m$,  it follows from the construction of the CFG in the proof of Theorem \ref{the condition of lattices induced by CFG} that $L$ is generated by a simple CFG on a graph, say $G'$, such that $V(G')=M\cup \set{s}$ and if $(v_1,v_2) \in E(G')$ then $v_2=s$ or $(v_1,v_2) \in E(\mathcal{G})$, where $\mathcal{G}$ is the graph that is defined as in Lemma \ref{CFGs on acyclic graphs}. It follows directly from Lemma \ref{CFGs on acyclic graphs} that  $\mathcal{G}$ is acyclic, so is $G'$. Theorem \ref{theorem of Magnien} implies that ${\rm CFG(G,c_0)}$ is equivalent to a ASM.
\end{proof}
Using Lemma \ref{CFGs on acyclic graphs} and a similar argument as in the proof of Theorem \ref{CFGs on acyclic graph including in ASM} we obtain a necessary and sufficient criterion for the class of lattices generated by CFGs on acyclic graphs
\begin{coro}
Let $L \in {\rm L(CFG)}$. Then $L$ is generated by a CFG on an acyclic graph if and only if $\mathcal{G}$ is acyclic.
\end{coro}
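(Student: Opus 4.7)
The plan is to prove the equivalence by establishing each direction separately, leveraging the machinery already built in Lemma \ref{CFGs on acyclic graphs} and Theorem \ref{CFGs on acyclic graph including in ASM}.

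For the forward direction ($\Rightarrow$), there is essentially nothing new to do: if $L$ is generated by some CFG on an acyclic graph, then Lemma \ref{CFGs on acyclic graphs} immediately gives that $\mathcal{G}$ is acyclic. I would simply cite this result.

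For the backward direction ($\Leftarrow$), suppose $\mathcal{G}$ is acyclic. Since $L \in L(CFG)$, Theorem \ref{the condition of lattices induced by CFG} guarantees that for every $m \in M$, the system $\mathcal{E}(m)$ admits a non-negative integer solution ${f'}_m$. My plan is to replace ${f'}_m$ by the modified solution $f_m$ already introduced in the proof of Theorem \ref{CFGs on acyclic graph including in ASM}, namely
\[
f_m(y)=\begin{cases}
{f'}_m(y)&\text{if } y=w \text{ or } y=\down{x} \text{ with } x\in \underset{a \in \mathfrak{U}_m}{\bigcup}(M\backslash M_a),\\
0&\text{otherwise}.
\end{cases}
\]
One easily verifies (as in that earlier proof) that $f_m$ is still a non-negative integer solution of $\mathcal{E}(m)$, because zeroing out the coefficients associated with $\mathfrak{L}_m$ can only make the strict-inequality constraints on the $\mathfrak{L}_m$ side easier to satisfy, while the $\mathfrak{U}_m$ side is untouched.

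Now I would feed these solutions $f_m$ into the construction of the support graph $G'$ and initial configuration $\mathcal{O}'$ described in the proof of Theorem \ref{the condition of lattices induced by CFG}. That construction yields a simple CFG on the graph $G'$ with $V(G')=M\cup\set{s}$ such that $CFG(G',\mathcal{O}')$ is isomorphic to $L$. The crucial observation is that, by the choice of $f_m$, any edge $(v_1,v_2)\in E(G')$ with $v_2\neq s$ satisfies $f_{v_2}(\down{v_1})>0$, which forces $v_1\in \underset{a\in \mathfrak{U}_{v_2}}{\bigcup}(M\backslash M_a)$, i.e.\ $(v_1,v_2)\in E(\mathcal{G})$. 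Thus the directed edges of $G'$ that do not point to $s$ form a subgraph of $\mathcal{G}$, and since every non-sink vertex has an outgoing edge to $s$ (and $s$ has no outgoing edges), any directed cycle in $G'$ would avoid $s$ and hence lie in $\mathcal{G}$, contradicting acyclicity of $\mathcal{G}$. Therefore $G'$ is acyclic and $L$ is generated by the CFG $(G',\mathcal{O}')$.

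The only genuinely delicate step is checking that zeroing out the $\mathfrak{L}_m$-variables preserves the system's feasibility; but this is precisely the verification already absorbed into the proof of Theorem \ref{CFGs on acyclic graph including in ASM}, so the present corollary amounts to packaging Lemma \ref{CFGs on acyclic graphs} together with that construction. No new obstacle is expected.
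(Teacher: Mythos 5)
Your proof is correct and follows essentially the same route the paper intends: the forward direction is exactly Lemma \ref{CFGs on acyclic graphs}, and the backward direction reuses the zeroed-out solutions $f_m$ and the construction from Theorem \ref{the condition of lattices induced by CFG}, just as the paper's one-line proof (``using Lemma \ref{CFGs on acyclic graphs} and a similar argument as in the proof of Theorem \ref{CFGs on acyclic graph including in ASM}'') prescribes. You have merely made explicit the details the paper leaves to the reader, including the correct observation that every non-sink edge of the constructed graph lies in $E(\mathcal{G})$.
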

\begin{proof}
The necessary condition is proved by Lemma \ref{CFGs on acyclic graphs}. It remains to prove that the sufficient condition also hold. Let $\mathcal{G}$ be given as in Lemma \ref{CFGs on acyclic graphs}. By Theorem \ref{the condition of lattices induced by CFG} that for each $m \in M$, $\mathcal{E}(m)$ has non-negative integral solutions. We define non-negative integral solutions $f'_m$, $f_m$ for each $m \in M$, and a simple CFG on $G'$ that generates $L$  as in Theorem \ref{CFGs on acyclic graph including in ASM}. Since $\mathcal{G}$ is acyclic  and since if $(v_1,v_2) \in E(G')$ then $v_2=s$ or $(v_1,v_2) \in E(\mathcal{G})$, it follows that $G'$ is acyclic. This completes the proof.
\end{proof}
Let ${\rm L(ACFG)}$ denote the class of lattices generated by CFGs on acyclic graphs. Theorem \ref{CFGs on acyclic graph including in ASM} implies that ${\rm L(ACFG)}\subseteq {\rm L(ASM)}$. We consider the lattice shown in Figure \ref{fig:image16}. In this case, $\mathcal{G}$ is presented by the following figure
\begin{center}
\includegraphics[bb=0 2 150 155,width=1.2in,height=1.22in,keepaspectratio]{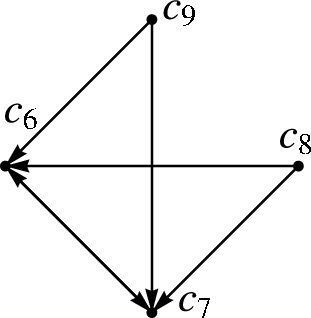}
\end{center}
$\mathcal{G}$ is not acyclic, therefore the lattice is not in ${\rm L(ACFG)}$. From Example \ref{second example ?}, the lattice is in ${\rm L(ASM)}$. It implies that ${\rm L(ACFG)}\subsetneq {\rm L(ASM)}$. Furthermore the lattice presented in Figure \ref{fig:image37}
\begin{figure}[h]
\centering
\includegraphics[bb=4 4 107 151,width=1.24in,height=1.77in,keepaspectratio]{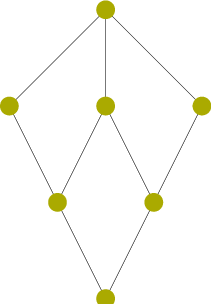}
\caption{A lattice in ${\rm L(ACFG)}\backslash D$}
\label{fig:image37}
\end{figure}
is generated by a CFG on acyclic graph but not a distributive lattice. Thus $D \subsetneq {\rm L(ACFG)}$.
\section{Conclusion and perspectives}
\label{sixth section}
In this paper we have studied the properties of three classes of lattices generated by CFGs, that are ${\rm L(CFG)}$, ${\rm L(ASM)}$ and ${\rm L(ACFG)}$. On algorithmic aspect we give a necessary and sufficient criterion for each studied class. These criteria provide the polynomial-time algorithms for determining which class of lattices a given ULD lattice belongs to. A relation between those classes of lattices is also pointed out by showing that ${\rm L(ACFG)}$ is situated strictly between the class of distributive lattices and ${\rm L(ASM)}$. In other word, we obtain a finer chain of the studied classes of lattices, that is
$$
D \subsetneq {\rm L(ACFG)} \subsetneq {\rm L(ASM)}\subsetneq {\rm L(CFG)}={\rm L(MCFG)}\subsetneq  ULD,
$$
where ${\rm L(MCFG)}$ is the class of lattices generated by MCFGs (Multating Chip Firing Game \cite{B97,H99,Mag03}).

It is interesting to investigate CFGs defined on the classes  of graphs that are studied widely in literature, for example the class of Eulerian directed graphs. This class is a close extension of the class of undirected graphs. Recall that a graph $G$ is \emph{Eulerian} if it is connected and for each vertex of $G$ its out-degree and in-degree are equal. We define a CFG on an Eulerian graph $G$ as follows. We fix a vertex $s$ of $G$ which will play a role as the sink of the game. Then we remove all out-edges of $s$. The resulting graph $G'$ remains a connected graph and has no closed component. The game is defined on this graph. Let $L(ECFG)$ denote the class of lattices generated by CFGs on Eulerian graphs. It is clear that ${\rm L(ASM)}\subseteq {\rm L(ECFG)}\subseteq {\rm L(CFG)}$. The problem of determining which inclusion is strict remains to be done.  

It turns out to be interesting that a CFG defined on each studied class of graphs is equivalent to a simple CFG which again is defined on this class. This property implies that to study the lattices generated by CFGs defined on these classes of graphs, it is sufficient to study simple CFGs. As we saw in this paper this property is proved on different classes of graphs with different techniques. Thus it is not easy to know whether this property holds for other classes of graphs. In particular we still do not know whether this property holds for the Eulerian graphs. A  characterization of classes of graphs having this property remains to be done.

Finally, we are also interested in the following computational problem: Given a graph $G$ and a ULD lattice $L$, is $L$ generated by a CFG on $G$?

So now, we have the practical criteria for the classes of lattices generated by CFGs defined on three classes of graphs which are studied widely in literature, they are acyclic graphs, undirected graphs, and directed graphs. We believe that our method presented here is not only applicable to these classes but also applicable to many other classes of graphs on which CFGs are defined.
\text{}\\

\textbf{Acknowledgement.} We would like to thank D. Dhar for the comments on the term ``Abelian Sandpile model'' that was used in the first version (in arXiv and also in journal). The more informations about  this term have been added to this version.
\pagebreak

Trung Van Pham\\
Department of Mathematics of Computer Science\\
Vietnam Institute of Mathematics\\
18 Hoang Quoc Viet road, Cau Giay district, Hanoi, Vietnam\\
Email: pvtrung@math.ac.vn\\
\text{}\\
Thi Ha Duong Phan\\
Department of Mathematics of Computer Science\\
Vietnam Institute of Mathematics\\
18 Hoang Quoc Viet road, Cau Giay district, Hanoi, Vietnam\\
Email: phanhaduong@math.ac.vn
\end{document}